\newtheorem{theorem}{Theorem}[section]
\newtheorem{lemma}[theorem]{Lemma}
\theoremstyle{definition}
\newtheorem{remark}[theorem]{Remark}
\newcommand{\ignore}[1]{}
\newcommand{\hcm}[1][1]{\hspace*{#1 cm}}
\newcommand{\rb}[2]{\raisebox{#1 mm}[0mm][0mm]{#2}}
\newcommand{\istrut}[2][0]{\rule[- #1 mm]{0mm}{#1 mm}\rule{0mm}{#2 mm}}
\newcommand{\zero}[1]{\makebox[0mm][r]{#1}}
\newcommand{\E}{{\mathbb E\/}}
\newcommand{\ceil}[1]{\lceil #1 \rceil}
\newcommand{\floor}[1]{\lfloor #1 \rfloor}
\newcommand{\f}[2]{\frac{#1}{#2}}
\newcommand{\bydef}{\stackrel{\operatorname{def}}{=}}
\newcommand{\poly}{\operatorname{poly}}
\newcommand{\polylog}{\operatorname{polylog}}
\newcommand{\bottom}{\perp}
\newcommand{\argmin}{\operatornamewithlimits{argmin}}
\newcommand{\lo}{\operatorname{lo}}
\newcommand{\hi}{\operatorname{hi}}
\newcommand{\targetplus}{\ocirc}
\newcommand{\colorfunc}{\mbox{\sc color}}
\newcommand{\RAND}{{\sc rand.}}
\newcommand{\DECTREE}{{\sc dec.~tree}}
\newcommand{\contour}{\mbox{\sc contour}}
\newcommand{\row}{\operatorname{r}}
\newcommand{\col}{\operatorname{c}}
\newcommand{\ThreeSUM}{\textsf{3SUM}}
\newcommand{\TwoSUM}{\textsf{2SUM}}
\newcommand{\SUM}{\textsf{SUM}}
\newcommand{\ConvolutionThreeSUM}{\textsf{Convolution3SUM}}
\newcommand{\IntegerConvolutionThreeSUM}{\textsf{IntegerConv3SUM}}
\newcommand{\LDT}{\textsf{LDT}}
\newcommand{\IntegerThreeSUM}{\textsf{Integer3SUM}}
\newcommand{\ZeroTriangle}{\textsf{ZeroTriangle}}
\newcommand{\LoveTriangle}{\textsf{LoveTriangle}}
\newcommand{\Erdos}{Erd\H{o}s}
\newcommand{\Patrascu}{P\v{a}tra\c{s}cu}
\newcommand{\Turan}{Tur\'{a}n}
\title{Threesomes, Degenerates, and Love Triangles\thanks{This work is supported in part by the Danish National Research Foundation
grant DNRF84 through the Center for Massive Data Algorithmics (MADALGO).
S. Pettie is supported by NSF grants CCF-1217338 and CNS-1318294
and a grant from the US-Israel Binational Science Foundation.}}
\author{Allan Gr\o nlund \\ MADALGO, Aarhus University \and Seth Pettie\\ University of Michigan}
\begin{document}

\maketitle

\begin{abstract}
The \ThreeSUM{} problem is to decide, given a set of $n$ real numbers, whether
any three sum to zero.  It is widely conjectured that a trivial $O(n^2)$-time algorithm is optimal
and over the years the consequences of this conjecture have been revealed.
This \ThreeSUM{} conjecture implies $\Omega(n^2)$ lower bounds on numerous problems in computational geometry
and a variant of the conjecture
implies strong lower bounds on triangle enumeration, dynamic graph algorithms, and string matching data structures.
 
In this paper we refute the \ThreeSUM{} conjecture.
We prove that the decision tree complexity of \ThreeSUM{} is $O(n^{3/2}\sqrt{\log n})$
and give two subquadratic \ThreeSUM{} algorithms, a deterministic one running in $O(n^2 / (\log n/\log\log n)^{2/3})$ time
and a randomized one running in $O(n^2 (\log\log n)^2 / \log n)$ time with high probability.  
Our results lead directly to improved bounds for
$k$-variate linear degeneracy testing for all odd $k\ge 3$.
The problem is to decide, given a linear function 
$f(x_1,\ldots,x_k) = \alpha_0 + \sum_{1\le i\le k} \alpha_i x_i$ and a set $A \subset \mathbb{R}$,
whether $0\in f(A^k)$.
We show the decision tree complexity of this problem is $O(n^{k/2}\sqrt{\log n})$.

Finally, we give a subcubic algorithm for a generalization of the $(\min,+)$-product over real-valued matrices and apply it
to the problem of finding zero-weight triangles in weighted graphs.
We give a depth-$O(n^{5/2}\sqrt{\log n})$ decision tree for this problem, 
as well as an algorithm running in time $O(n^3 (\log\log n)^2/\log n)$.
\end{abstract}

\section{Introduction}

The time hierarchy theorem~\cite{HartmanisS65} implies that there exist problems in $\mathbf{P}$ with complexity $\Omega(n^k)$ for every fixed $k$.
However, it is consistent with current knowledge that all problems of practical interest can be solved in 
$\tilde{O}(n)$ time in a reasonable model of computation.
Efforts to build a useful complexity theory inside $\mathbf{P}$ have been based on the conjectured hardness
of certain archetypal problems, such as \ThreeSUM, $(\min,+)$-matrix product, and CNF-SAT.
See, for example, the conditional lower bounds in \cite{GajentaanO95,Patrascu10,PatrascuW10,JafargholiV13,AbboudL13,AbboudW14,RodittyW13,ChechikLRSTW14,WilliamsW10}.

In this paper we study the complexity of \ThreeSUM{} and related problems such as linear degeneracy testing (\LDT)
and finding zero-weight triangles.  Let us define the problems formally.

\setdescription{leftmargin=1.5cm,labelindent=\parindent}
\begin{description}
\item[{\bf \ThreeSUM:}]
Given a set $A \subset \mathbb{R}$, determine if there exists $a,b,c\in A$ such 
that $a+b+c=0$.

\item[{\bf \IntegerThreeSUM:}] Given a set $A \subseteq \{-U,\ldots,U\}\subset \mathbb{Z}$, determine if there exists $a,b,c\in A$
such that $a+b+c=0$.

\item[{\bf $k$-\LDT{} and $k$-\SUM:}]
Fix a $k$-variate linear function $\phi(x_1,\ldots,x_k) = \alpha_0 + \sum_{i=1}^{k} \alpha_i x_i$, where $\alpha_0,\ldots,\alpha_k \in \mathbb{R}$.
Given a set $A \subset \mathbb{R}$, determine if $\phi(\mathbf{x})=0$ for any $\mathbf{x}\in A^k$.
When $\phi$ is $\sum_{i=1}^k x_i$ the problem is called $k$-\SUM.  

\item[{\bf \ZeroTriangle:}]
Given a weighted undirected graph $G=(V,E,w)$, where $w : E\rightarrow \mathbb{R}$, determine if there exists a triangle $(a,b,c)\in V^3$
for which $w(a,b)+w(b,c)+w(c,a) =0$.  
(From the definition of {\em love : a score of zero}, one could also call this the \LoveTriangle{} problem.)

\end{description}

These problems are often defined with further constraints that do not change the problem in any substantive way~\cite{GajentaanO95}.
For example, the input to \ThreeSUM{} can be three sets $A,B,C\subset \mathbb{R}$ and the problem is to determine if
there exists $a\in A, b\in B, c\in C$ such that $a+b+c=0$.  Even if there is only one set, there is sometimes an additional constraint
that $a,b,$ and $c$ be {\em distinct} elements.

As a problem in its own right, \ThreeSUM{} has no compelling practical applications.  However, lower bounds on \ThreeSUM{} imply lower
bounds on dozens of other problems that {\em are} of practical interest.  Before reviewing existing \ThreeSUM{} algorithms we
give a brief survey of conditional lower bounds that depend on the hardness of \ThreeSUM.

\subsection{Implications of the \ThreeSUM{} Conjectures}\label{sect:review-ThreeSUMhard}

It is often conjectured that \ThreeSUM{} requires $\Omega(n^2)$ time and that \IntegerThreeSUM{} requires $\Omega(n^{2-o(1)})$ time~\cite{Patrascu10,AbboudW14}.
These conjectures have been shown to imply strong lower bounds on numerous problems in computational
geometry~\cite{GajentaanO95,AichholzerADHRU12,BarequetH01,SossEO03}
dynamic graph algorithms~\cite{Patrascu10,AbboudW14},
and pattern matching~\cite{AbboudWW14,AmirCLL14,ButmanCCJLPPS13,ChenHC09}.
For example, 
the \ThreeSUM{} conjecture implies that the following problems require at least $\Omega(n^2)$ time.
\begin{enumerate}
\item[---] Given an $n$-point set in $\mathbb{R}^2$, determine whether it contains three collinear points~\cite{GajentaanO95}.
\item[---] Given two $n$-edge convex polygons,  determine whether one can be placed inside the other via rotation and translation~\cite{BarequetH01}.
\item[---] Given $n$ triangles in $\mathbb{R}^2$, determine whether their union contains a hole, or determine the area of their union~\cite{GajentaanO95}.
\end{enumerate}
Through a series of reductions, \Patrascu~\cite{Patrascu10} proved that the \IntegerThreeSUM{} conjecture implies lower bounds on triangle enumeration and
various problems in dynamic data structures, even when all updates and queries are presented in advance.
Some lower bounds implied by the \IntegerThreeSUM{} conjecture include the following.
\begin{enumerate}
\item[---] Given an undirected $m$-edge graph, enumerating up to $m$ triangles (3-cycles) requires at least $\Omega(m^{4/3-o(1)})$ time~\cite{Patrascu10}.\footnote{Bjorklund et al.~\cite{BjorklundPWZ14} recently proved that the exponent 
$4/3$ is optimal if the matrix multiplication exponent $\omega$ is 2.}
\item[---] Given a sequence of $m$ updates to a directed graph (edge insertions and edge deletions) and two specified vertices $s,t$,
determining whether $t$ is reachable from $s$ after each update requires at least $\Omega(m^{4/3-o(1)})$ time~\cite{AbboudW14}.
\item[---] Given an edge-weighted undirected graph, deciding whether there exists a zero-weight triangle requires at least $\Omega(n^{3-o(1)})$ time~\cite{WilliamsW13}.
\end{enumerate}

In recent years conditional lower bounds have been obtained from two other plausible conjectures: that computing the $(\min,+)$-product of two $n\times n$ matrices
takes $\Omega(n^{3-o(1)})$ time and that 
CNF-SAT takes $\Omega(2^{(1-o(1))n})$ time.  The latter is sometimes called the Strong Exponential Time Hypothesis (Strong ETH).
We now know that if the Strong ETH holds, no $n^{o(k)}$ algorithm exists for $k$-\SUM~\cite{PatrascuW10}
and no $m^{2-\Omega(1)}$ algorithm exists for $(3/2-\epsilon)$-approximating the diameter of an $m$-edge unweighted graph~\cite{RodittyW13,ChechikLRSTW14}.
Williams and Williams~\cite{WilliamsW10} proved that numerous problems are equivalent to $(\min,+)$-matrix multiplication, inasmuch as a truly subcubic ($O(n^{3-\Omega(1)})$)
algorithm for one would imply truly subcubic algorithms for all the others.

\subsection{Algorithms, Lower Bounds, and Reductions}\label{sect:review-ThreeSUM-algs}

The evidence in favor of the \ThreeSUM{} and \IntegerThreeSUM{} conjectures is rather thin.
Erickson~\cite{Erickson99} and Ailon and Chazelle~\cite{AilonC05} proved that any $k$-linear decision tree for solving $k$-\LDT{} must have 
depth $\Omega(n^{k/2})$ when $k$ is even and $\Omega(n^{(k+1)/2})$ when $k$ is odd.
In particular, any 3-linear decision tree for \ThreeSUM{} has depth $\Omega(n^2)$.  (An $s$-linear decision tree is one where each internal node
asks for the sign of a linear expression in $s$ elements.)
The \IntegerThreeSUM{} problem is obviously not harder than \ThreeSUM, but no other relationship between these two problems is known.
Indeed, the assumption that elements are integers opens the door to a variety of algorithmic techniques that cannot be modeled as decision trees.
Using the fast Fourier transform it is possible to solve \IntegerThreeSUM{} in 
$O(n + U\log U)$ time, which is subquadratic even for a rather large universe size $U$.\footnote{Erickson~\cite{Erickson99} 
credits R.~Seidel with this \ThreeSUM{} algorithm.}
Baran, Demaine, and \Patrascu~\cite{BaranDP08} showed that \IntegerThreeSUM{} can be solved in $O(n^2 / (\log n/\log\log n)^2)$ time (with high probability)
on the word RAM, where $U = 2^w$ and $w>\log n$ is the machine word size.
The algorithm uses a mixture of randomized universe reduction (via hashing), word packing, and table lookups.

It is straightforward to reduce $k$-\LDT{} to a \TwoSUM{} problem or unbalanced \ThreeSUM{} problem, depending on whether $k$ is even or odd.
When $k$ is odd one forms certain sets $A,B,C$ where $|A|=|B|=n^{(k-1)/2}$ and $|C|=n$, then sorts them in $O(n^{(k-1)/2}\log n)$ time.  
The standard three-set \ThreeSUM{} algorithm on $A,B,C$ takes $O(|C|(|A|+|B|)) = O(n^{(k+1)/2})$ time.  When $k\ge 4$ is even there is no set $C$.
Using Lambert's algorithm~\cite{Lambert92}, $A$ and $B$ can be sorted is $O(n^{k/2}\log n)$ time while performing only $O(n^{k/2})$ comparisons.
These algorithms can be modeled as $k$-linear decision trees, and are therefore optimal in this model by the lower bounds of~\cite{Erickson99,AilonC05}.
However, it was known that all $k$-\LDT{} problems can be solved by $n$-linear decision trees with depth $O(n^5\log n)$~\cite{Meiser93}, 
or with depth $O(n^4\log(nK))$ if the coefficients of the linear function are integers with absolute value at most $K$~\cite{Heide84}.
Unfortunately these decision trees are not efficiently constructible.
The time required to determine {\em which} comparisons to make is exponential in $n$.

The \ZeroTriangle{} problem was highlighted in a recent article by Williams and Williams~\cite{WilliamsW13}.  They
did not give any subcubic algorithm, but did show that a subcubic \ZeroTriangle{} algorithm would have implications 
for \IntegerThreeSUM{} via an intermediate problem called \ConvolutionThreeSUM.
\setdescription{leftmargin=1.5cm,labelindent=\parindent}
\begin{description}
\item[{\bf \ConvolutionThreeSUM:}] Given a vector $A \in \mathbb{R}^n$, determine if there exist $i,j$ for which $A(i)+A(j) = A(i+j)$.
\item[{\bf \IntegerConvolutionThreeSUM:}] The same as \ConvolutionThreeSUM, except that
$A \in \{0,\ldots,U-1\}^n$ and $U\le 2^w$, where $w=\Omega(\log n)$ is the machine word size.
\end{description}

\Patrascu~\cite{Patrascu10} defined the \ConvolutionThreeSUM{} problem and gave a randomized reduction from \IntegerThreeSUM{} to 
\IntegerConvolutionThreeSUM.  Williams and Williams~\cite{WilliamsW13} gave a reduction from \ConvolutionThreeSUM{} to \ZeroTriangle.
Neither of these reductions is frictionless.  Define $T_{I3S},T_{IC3S},T_{C3S}$ and $T_{ZT}$ to be the complexities of the various problems on 
inputs of length $n$, or graphs with $n$ vertices.  Clearly $T_{IC3S}(n) \le T_{C3S}(n)$.  The reductions show that for any $k$,  
$T_{I3S}(n) = O(n^2/k + k^3\cdot T_{IC3S}(n/k))$ and $T_{C3S}(n) = O(\sqrt{n}\cdot T_{ZT}(\sqrt{n}))$.  
Note that even if \ZeroTriangle{} had an $O(n^2)$-time algorithm (optimal on dense graphs),
this would only give an $O(n^{9/5})$ bound for \IntegerThreeSUM.

\begin{table}
\centering
\scalebox{1}{
\begin{tabular}{r|l@{\hcm[1.8]\istrut{5}}rcr|l@{\hcm[1.8]}r}
\multicolumn{3}{l}{\ThreeSUM}								&\hcm[.4] &	\multicolumn{3}{l}{\IntegerThreeSUM}\\\cline{1-3}\cline{5-7}
trivial		& $n^2$ 		&								&&	trivial		& $n^2$ & \\
		& 							&				&&	Seidel 1997	& $n + U\log U$ &	\\
		& \rb{4}{$n^{3/2}\sqrt{\log n}\;$} & \rb{4}{\zero{\DECTREE}} 						&&	\rb{-1}{Baran, Demaine,} 	& \rb{-1}{$n^2 / \left(\f{\log n}{\log\log n}\right)^2$} & \rb{-1}{\zero{\RAND}}\\
\rb{2.5}{\bf new}	& \rb{2.5}{$n^2 / \left(\frac{\log n}{\log\log n}\right)^{2/3}$} 	&		&&	\rb{0}{\hcm[.1] \Patrascu{} 2005} &	$n^2/ \frac{w}{\log^2 w}$ &\zero{\RAND}\\
		& \rb{2}{$n^2 / \frac{\log n}{(\log\log n)^2}$} & \rb{2}{\zero{\RAND}}	&&	{\bf new}		& \istrut[3]{0}$n^2 / \left(\frac{\log n}{\log\log n}\right)^{2/3}$ & \\\cline{1-3}\cline{5-7}
\multicolumn{7}{l}{ }\\
\multicolumn{7}{l}{ }\\
\multicolumn{3}{l}{\ConvolutionThreeSUM}					&&	\multicolumn{3}{l}{\ZeroTriangle}\\\cline{1-3}\cline{5-7}
trivial		& $n^2$ &										&&	trivial		& $n^3$&\\
		& $n^{3/2}\sqrt{\log n}\;$ &\zero{\DECTREE}					&& 			& $n^{5/2}\sqrt{\log n}\;$ &\zero{\DECTREE}\\			
		& $n^{3/2}\;$  & \zero{\RAND, \DECTREE}					&& 			& $n^{5/2}\;$ &\zero{\RAND, \DECTREE}\\
\rb{3}{\bf new}		& $n^2 / \frac{\log n}{(\log\log n)^2}$	&				&&			& $n^3 / \frac{\log n}{(\log\log n)^2}$ &\\
		& $n^2 / \frac{\log n}{\log\log n}\;$ & \istrut[3]{0}\zero{\RAND}			&&			& \istrut[2]{0}$n^3 / \frac{\log n}{\log\log n}\;$ &\zero{\RAND}\\\cline{1-3}
\multicolumn{3}{l}{ }										&&		\rb{3}{\bf new} & \istrut[2]{0}$m^{5/4}\sqrt{\log m}\;$ &\zero{\DECTREE}\\
\multicolumn{3}{l}{ }										&&			& $m^{5/4}$ & \zero{\RAND, \DECTREE}\\
\multicolumn{3}{l}{ }										&&			& $m^{3/2} / \left(\frac{\log m}{(\log\log m)^2}\right)^{1/4}$ &\\
\multicolumn{3}{l}{ }										&&			& $m^{3/2} / \left(\frac{\log m}{\log\log m}\right)^{1/4}\;$ &\zero{\RAND}\\\cline{5-7}
\end{tabular}
}
\caption{\label{fig:results} A summary of the results.
Results in the decision tree model are indicated by \DECTREE,  results using randomization are indicated by \RAND{}
In \ZeroTriangle{} $n$ and $m$ are the number of vertices and edges whereas in all other problems $n$ is the length of the input.
In \IntegerThreeSUM{} $w=\Omega(\log n)$ is the machine word size and $U \le 2^w$ the size of the universe.
}
\end{table}

\subsection{New Results.}

We give the first subquadratic bounds on both the decision tree complexity of \ThreeSUM{} and the algorithmic complexity of \ThreeSUM,
which also gives the first deterministic subquadratic algorithm for \IntegerThreeSUM.\footnote{We assume a simplified 
Real RAM model.  Real numbers are subject to only two unit-time operations: addition and comparison.
In all other respects the machine behaves like a $w=O(\log n)$-bit word RAM with the standard repertoire of unit-time 
$AC^0$ operations: bitwise Boolean operations, left and right shifts, addition, and comparison.}
Our method leads to similar improvements to the decision tree complexity of $k$-\LDT when $k\ge 3$ is odd.
Refer to Figure~\ref{fig:results} for a summary of prior work and our results.

\begin{theorem}\label{thm:3SUM}
There is a 4-linear decision tree for \ThreeSUM{} with depth $O(n^{3/2}\sqrt{\log n})$.
Furthermore, \ThreeSUM{} can be solved deterministically in $O(n^2 / (\log n/\log\log n)^{2/3})$ time
and, using randomization, in $O(n^2 (\log \log n)^2/ \log n)$ time with high probability.
\end{theorem}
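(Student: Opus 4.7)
The plan is to establish the decision-tree bound $O(n^{3/2}\sqrt{\log n})$ first and then derive the two algorithmic bounds by implementing it on the word RAM. Reducing to the standard three-set version, I assume sorted inputs $A,B,C\subset\mathbb{R}$ of length $n$ (the initial sort costs $O(n\log n)$ comparisons), and must decide whether $-a=b+c$ for some $(a,b,c)\in A\times B\times C$.

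For the decision tree, partition $B$ and $C$ into $n/g$ blocks of size $g$ each, where $g$ is to be tuned. The guiding identity is Fredman's trick: any comparison $b+c$ vs $b'+c'$ between sums in a block pair $(B_j,C_k)$ reduces to $b-b'$ vs $c'-c$ between an internal $B_j$-difference and an internal $C_k$-difference, so the linear order of the $g^2$ sums $B_j+C_k$ is determined entirely by the merged order of the $O(g^2)$ internal differences of $B_j$ and $C_k$. I would therefore precompute, for each of the $2n/g$ blocks, a sorted list of its $\binom{g}{2}$ internal differences using mergesort on the sorted-matrix structure of the differences, at a cost of $O(g^2\log g)$ comparisons per block. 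For each $a\in A$, a monotone two-pointer walk at block granularity identifies the $O(n/g)$ block pairs whose value-range can contain $-a$; within each such pair the implicit sorted sum-set admits a binary search for $-a$ in $O(\log g)$ true sum-versus-target comparisons, each midpoint lookup being resolved via the precomputed difference ordering. This gives preprocessing cost $O(ng\log g)$, query cost $O(n\cdot(n/g)\log g)=O((n^2/g)\log g)$, and balancing them at $g=\sqrt{n}$ yields $O(n^{3/2}\log n)$.

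The hard part will be shaving the final $\sqrt{\log n}$ factor to reach $O(n^{3/2}\sqrt{\log n})$: this requires exploiting the sorted-matrix structure of each block's $g\times g$ difference matrix so that the per-block sorting effectively costs $O(g^2\sqrt{\log g})$ amortized across blocks rather than $O(g^2\log g)$ in isolation, and choosing $g$ accordingly. I expect this amortization to be the most delicate part of the argument, since naively each block's difference sort already matches the information-theoretic lower bound $\Omega(g^2\log g)$; the savings must come from reusing comparisons across blocks via a clever interleaving of the sorts with the query phase.

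For the algorithmic bounds I would simulate the decision tree using packed comparisons in the spirit of Baran--Demaine--\Patrascu. For the deterministic $O(n^2/(\log n/\log\log n)^{2/3})$ bound, pack each sorted block of size $g$ into a single $\Theta(g\log n)$-bit machine word and precompute lookup tables that execute a batch of $g$ comparisons in $O(1)$ RAM operations; balancing $g$ against the word size $w=\Theta(\log n)$ and the decision-tree depth gives the claimed exponent $2/3$. For the randomized bound, hash $A$ (and the implicit sumset $B+C$) into $\Theta(n\log n/(\log\log n)^2)$ buckets using a universal family, run the blocked decision-tree procedure on each bucket's contents, and apply Chernoff concentration to bound the maximum bucket load, yielding $O(n^2(\log\log n)^2/\log n)$ with high probability.
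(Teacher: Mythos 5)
Your decision-tree plan has the right skeleton (block the input, use Fredman's trick to reduce sums to internal differences, do a block-granularity two-pointer walk with a binary search per block pair), which matches the paper's algorithm in Section~\ref{sect:nonuniform}. But the place you flag as ``the hard part'' is indeed where your proposal breaks, and the fix you speculate about is not the right one. Sorting the internal differences of each block is not an information-theoretic $\Omega(g^2\log g)$ task, and there is no amortization-across-blocks argument in the paper. The key is Lemma~\ref{lem:Fredman} (Fredman 1976) combined with Buck's bound on hyperplane arrangements, packaged here as Lemma~\ref{lem:sortX+Y}: a Cartesian sum restricted to $|F|$ positions over $n$ underlying reals has only $O\bigl(\tbinom{|F|}{2}^{2n}\bigr)$ realizable orders, so it can be sorted with $O(|F|+n\log|F|)$ comparisons, not $O(|F|\log|F|)$. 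Applied to $D=\bigcup_i(A_i-A_i)$ with $|F|=gn$, this sorts \emph{all} block difference sets with $O(gn+n\log n)$ comparisons. That eliminates a full $\log g$ factor from preprocessing, and rebalancing against the $O((n^2/g)\log g)$ query cost then forces $g=\sqrt{n\log n}$ and gives $O(n^{3/2}\sqrt{\log n})$. Your mergesort-based $O(ng\log g)$ preprocessing leads only to $g=\sqrt n$ and depth $O(n^{3/2}\log n)$; without the sorting-with-few-permutations lemma you cannot recover the $\sqrt{\log n}$.

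Both of your proposed routes to the algorithmic bounds are unsound in the paper's model, which is a Real RAM: real numbers admit only addition and comparison, not bit extraction or hashing. Packing ``each sorted block of size $g$ into a single $\Theta(g\log n)$-bit machine word'' presupposes bounded-precision integers, and ``hash $A$ \ldots into buckets using a universal family'' is the Baran--Demaine--\Patrascu{} integer technique, which does not apply to \ThreeSUM{} over $\mathbb{R}$. The actual obstacle the paper must overcome is that Fredman's lemma tells you the sorted order of a box $A_{i,j}$ is \emph{implied} by cheap comparisons, but gives no $O(1)$-time way to \emph{evaluate} a given position's rank during a binary search. The paper's solution (Section~\ref{sect:uniform}) is Chan's bichromatic dominance reporting (Lemma~\ref{lem:redblue}): for each candidate permutation $\pi$ (or, in the refined version, each candidate pair of contours plus a permutation of the sandwiched region), form $O(g)$-dimensional red/blue points from block differences so that dominance encodes ``$\pi$ sorts $A_{i,j}$,'' and report all dominating pairs in time near-linear in $n/g$ per permutation plus the output size. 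The sort order of every box is then recorded as a small \emph{discrete} object (a word-sized encoding of $\pi$ and/or the contours), and binary searches become table lookups. The deterministic $g\approx(\log n/\log\log n)^{2/3}$ bound comes from placing a $q\times q$ grid of probe positions $P$ so every tripartition slab $S$ has $|S|=O(\log n)$; the randomized $g\approx\log n/\log\log n$ bound comes from choosing $P$ at random and showing a box is ``bad'' with probability at most $1/g$. Neither hashing nor bit-packing of reals appears anywhere. You would need the geometric-dominance machinery (and the contour/tripartition refinement) to complete the algorithmic half of the theorem.
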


\begin{theorem}\label{thm:kLDT}
When $k\ge 3$ is odd, there is a $(2k-2)$-linear decision tree for $k$-\LDT{} with depth $O(n^{k/2}\sqrt{\log n})$, 
\end{theorem}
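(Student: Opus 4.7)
The plan is to reduce $k$-\LDT{} to an unbalanced \ThreeSUM{} instance and then apply a suitably generalized version of the decision tree from Theorem~\ref{thm:3SUM}. Given a $k$-\LDT{} instance with $\phi(x_1,\ldots,x_k) = \alpha_0 + \sum_{i=1}^k \alpha_i x_i$ and input set $A_0\subset\mathbb{R}$ of size $n$, I would partition the indices $\{1,\dots,k\}$ into three groups of sizes $\tfrac{k-1}{2},\tfrac{k-1}{2},1$ and define
\[
A = \Big\{\sum_{i=1}^{(k-1)/2}\alpha_i y_i : y_i\in A_0\Big\},\quad
B = \Big\{\sum_{i=(k+1)/2}^{k-1}\alpha_i y_i : y_i\in A_0\Big\},\quad
C = \{\alpha_0+\alpha_k y : y\in A_0\}.
\]
Then $\phi(\mathbf{x})=0$ has a solution in $A_0^k$ iff there exist $a\in A,\,b\in B,\,c\in C$ with $a+b+c=0$, with $|A|=|B|=N := n^{(k-1)/2}$ and $|C|=n$.

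The key step is to produce a 4-linear decision tree of depth $O(N\sqrt{n\log n})$ for this unbalanced \ThreeSUM{} instance; substituting $N = n^{(k-1)/2}$ then gives the target depth $O(n^{k/2}\sqrt{\log n})$. Theorem~\ref{thm:3SUM} cannot be invoked as a black box: partitioning $A$ and $B$ into $N/n$ sub-blocks of size $n$ each and running Theorem~\ref{thm:3SUM} on every triple $(A_i,B_j,C)$ yields only $(N/n)^2 \cdot O(n^{3/2}\sqrt{\log n}) = O(n^{k-3/2}\sqrt{\log n})$ comparisons, suboptimal by $n^{(k-3)/2}$ for $k\ge 5$. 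Instead I would reopen the internal block-decomposition behind Theorem~\ref{thm:3SUM}, tuning the block size to the smaller set $C$ (of size $n$) rather than to $|A|$, so that the total cost summed over all block-pairs telescopes to $O(N\sqrt{n\log n})$.

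Finally, I would verify the arity. A 4-linear query in the \ThreeSUM{} tree compares the sign of a linear expression in at most four elements of $A\cup B\cup C$. Each element of $A\cup B$ is itself a linear combination of $(k-1)/2$ input elements of $A_0$, while each element of $C$ uses just one. The worst case --- all four queried elements drawn from $A\cup B$ --- gives a query in $4\cdot(k-1)/2 = 2k-2$ elements of $A_0$, so the resulting decision tree for $k$-\LDT{} is $(2k-2)$-linear of depth $O(n^{k/2}\sqrt{\log n})$. The main obstacle is the unbalanced refinement of Theorem~\ref{thm:3SUM}: one must reopen its block-based argument rather than treating it as a black box, and confirm that each phase (sorting, sumset dominance, per-pair comparison) scales gracefully when $|A|$ and $|B|$ are polynomially larger than $|C|$.
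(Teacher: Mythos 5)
Your proposal matches the paper's proof: both reduce $k$-\LDT{} to an unbalanced three-set \ThreeSUM{} instance with $|A|=|B|=n^{(k-1)/2}$ and $|C|=n$, then generalize the block-based decision tree of Section~\ref{sect:nonuniform} with the group size tuned to $|C|$ (namely $g=\sqrt{n\log n}$), yielding $O(g(|A|+|B|) + g^{-1}|C|(|A|+|B|)\log g) = O(n^{k/2}\sqrt{\log n})$ comparisons, and your arity count $4\cdot(k-1)/2 = 2k-2$ is exactly right. One very minor imprecision: the decision-tree construction relies on Fredman's lemma (sorting the difference set $D$) rather than any ``sumset dominance'' step, which belongs only to the uniform algorithms, not the nonuniform decision tree.
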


Theorem~\ref{thm:3SUM} refutes the \ThreeSUM{} conjecture and
casts serious doubts on the optimality of many $O(n^2)$ algorithms in computational geometry.
Theorem~\ref{thm:3SUM} also answers a question of Erickson~\cite{Erickson99} and Ailon and Chazelle~\cite{AilonC05} about whether $(k+1)$-linear decision
trees are more powerful than $k$-linear decision trees in solving $k$-\LDT{} problems.  
In the case of $k=3$, they are.

We define a new product of three real-valued matrices called {\em target-min-plus}, which is trivially computable in $O(n^3)$ time.
We observe that \ZeroTriangle{} is reducible to a target-min-plus product, then give subcubic bounds on the decision tree and
algorithmic complexity of target-min-plus.  Theorem~\ref{thm:ZeroTriangle} is an immediate consequence.

\begin{theorem}\label{thm:ZeroTriangle}
The decision tree complexity of \ZeroTriangle{}
is $O(n^{5/2}\sqrt{\log n})$ on $n$-vertex graphs and its randomized decision tree complexity
is $O(n^{5/2})$ with high probability.
There is a deterministic \ZeroTriangle{} algorithm running in $O(n^3 (\log\log n)^2 / \log n)$ time
and a randomized algorithm running in $O(n^3 \log\log n/\log n)$ time with high probability.
\end{theorem}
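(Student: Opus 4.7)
The plan is to derive Theorem~\ref{thm:ZeroTriangle} by reducing \ZeroTriangle{} to a new three-matrix product, which we call \emph{target-min-plus}, and then importing the gadgetry behind Theorem~\ref{thm:3SUM}. Target-min-plus takes real $n\times n$ matrices $A,B,T$ and asks whether $\exists i,j,k$ with $A(i,k)+B(k,j)=T(i,j)$; it is trivially $O(n^{3})$. Note that for a tripartite graph with parts $X,Y,Z$, setting $A[x,y]=w(x,y)$, $B[y,z]=w(y,z)$, and $T[x,z]=-w(x,z)$ turns the zero-triangle question into exactly one target-min-plus instance. To handle a general (not necessarily tripartite) input graph I would either 3-color the vertices randomly $O(\log n)$ times (any fixed triangle is rainbow with probability $\tfrac{2}{9}$, giving the randomized bounds) or apply standard color coding (for the deterministic bounds); the multiplicative overhead is $\polylog(n)$ and is absorbed into the claimed complexities.

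The main technical step is a subcubic decision tree for target-min-plus, which I would obtain by solving the $n$ row-subproblems one at a time. For row $i$, fix the length-$n$ vector $a^{(i)}=A[i,\cdot]$ and the length-$n$ target $t^{(i)}=T[i,\cdot]$; the task is to decide, for every column $j$, whether there exists $k$ with $a^{(i)}_{k}+B[k,j]=t^{(i)}_{j}$. My plan is to adapt the 3SUM decision tree of Theorem~\ref{thm:3SUM}: sort $a^{(i)}$ once, cut it into $n/g$ contiguous blocks of size $g=\Theta(\sqrt{n/\log n})$, and sort the corresponding block-restriction $B[K,\cdot]$ along each column $j$. Inside each pair consisting of an $a^{(i)}$-block and a column-block of $B$, the 3SUM contour/dominance argument produces, using $O(g^{2})$ comparisons, a data structure that answers, for every target $t^{(i)}_{j}$, whether it lies in the sumset $a^{(i)}_{K}+B[K,j]$. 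Summing over the $(n/g)\times n$ block/column pairs gives $O(n g) = O(n^{3/2}\sqrt{\log n})$ comparisons per row, and hence $O(n^{5/2}\sqrt{\log n})$ over all rows. The randomized $O(n^{5/2})$ bound comes from plugging in the randomized variant of the 3SUM contour construction, which saves the $\sqrt{\log n}$ factor.

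For the algorithmic bounds I would recycle the word-RAM speedup from Theorem~\ref{thm:3SUM}. Once the block-sorts and contours are in place, the per-query work inside a block reduces to rank comparisons against a sorted list of size $g$; packing $\Theta(\log n/\log\log n)$ such ranks per machine word and resolving them by universal table lookup turns the per-row cost from $O(n^{2})$ into $O(n^{2}(\log\log n)^{2}/\log n)$ deterministically and $O(n^{2}\log\log n/\log n)$ with randomized universe reduction. Multiplying by the $n$ rows gives exactly the claimed $O(n^{3}(\log\log n)^{2}/\log n)$ and $O(n^{3}\log\log n/\log n)$ bounds.

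The main obstacle I expect is justifying the amortization in the middle step: the 3SUM gadget of Theorem~\ref{thm:3SUM} is built for a single scalar target, whereas here every column $j$ of $B$ carries its own target $t^{(i)}_{j}$. I need to argue that the contour certificate built for a block pair $(a^{(i)}_{K},B[K,j])$ simultaneously resolves all $n$ row-queries for that column, and that the sort of $a^{(i)}$ and the sort of each column $B[\cdot,j]$ are shared across row/column passes so that their costs do not multiply. Once this sharing is shown to preserve the comparison bound $O(n^{3/2}\sqrt{\log n})$ per row, the rest of the theorem is a direct translation of the corresponding parts of Theorem~\ref{thm:3SUM}.
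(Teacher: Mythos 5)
Your high-level plan (introduce a target-min-plus product, reduce \ZeroTriangle{} to it, then import the $3$SUM machinery) matches the paper's architecture, but there are several concrete gaps that would prevent the proof from going through as written.

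First, the color-coding detour is both unnecessary and fatal to the stated bounds. The paper reduces an arbitrary graph $G=(V,E,w)$ directly to a single target-min-plus instance by setting $A(i,j)=B(i,j)=w(i,j)$ (with $\infty$ for non-edges and self-loops) and $T(i,j)=-w(i,j)$; then $C(i,j)=T(i,j)$ if and only if some $k$ closes a zero-weight triangle with $i,j$. No tripartition is needed. By contrast, $O(\log n)$ rounds of random $3$-coloring (or deterministic color coding) multiply each cost by $\Theta(\log n)$, which already destroys the claims $O(n^{5/2}\sqrt{\log n})$, $O(n^{5/2})$, and $O(n^3(\log\log n)^2/\log n)$ --- these bounds have no room for an extra $\log n$ factor.

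Second, the randomized improvements are attributed to the wrong technique. You invoke ``randomized universe reduction,'' which is the Baran--Demaine--\Patrascu{} hashing trick and applies only to \emph{integer} inputs on a word RAM; \ZeroTriangle{} here has real weights and the randomized speedups are decision-tree and comparison-based. The paper instead builds a hierarchy of $\log\log n$ nested random samples of the inner index set $[n]$ (intervals $I_{l,p}$ of width $g2^l$, with $J_l\cap I_{l,p}$ a random $g$-subset), and shows that the witness at level $l$ is within expected $O(1)$ rank of the projected witness from level $l+1$, so each binary search is replaced by an $O(1)$-expected linear search. This is where the $\sqrt{\log n}$ (resp.\ $\log\log n$) factor is shaved in the randomized decision tree (resp.\ algorithmic) bounds. ``Plugging in the randomized variant of the 3SUM contour construction'' is not a substitute; the $3$SUM randomization (Section~\ref{sect:rand-param}) lowers the algorithmic cost but does not yield a depth-$O(n^{3/2})$ decision tree, so it cannot give depth $O(n^{5/2})$ here either.

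Third, the accounting in the decision-tree argument is off. The comparison-efficient step is a single sort of the Fredman difference set $D_\ell=\{A(i,k)-A(i,k'),\ B(k',j)-B(k,j)\}$ for each of the $n/g$ blocks along the inner dimension; this sort has size $O(ng^2)$ and is shared across \emph{all} $(i,j)$ pairs, so it cannot be charged ``per row.'' The correct totals are $O(n^2g)$ for the sorts plus $O(n^3\log g/g)$ for the binary searches, balanced at $g=\sqrt{n\log n}$ (not $\sqrt{n/\log n}$) to give $O(n^{5/2}\sqrt{\log n})$. As written, $(n/g)\cdot n$ block/column pairs at $O(g^2)$ comparisons each does not equal $O(ng)$, and your chosen $g$ produces $O(n^{5/2}\log^{3/2}n)$. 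You flag this amortization as a potential obstacle, and it is indeed the place where the bookkeeping has to be redone globally rather than row by row.

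Finally, your version of target-min-plus is a pure decision problem (``does there exist $i,j,k$ with $A(i,k)+B(k,j)=T(i,j)$?''), whereas the paper computes $C(i,j)=\min\{A(i,k)+B(k,j): A(i,k)+B(k,j)\ge T(i,j)\}$ together with witnesses. This stronger formulation is what makes the binary/linear search structure and the hierarchical sampling argument work cleanly; the decision version alone does not obviously support the per-$(i,j)$ predecessor searches that drive the complexity.
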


Any $m$-edge graph contains $O(m^{3/2})$ triangles which can be enumerated in
$O(m^{3/2})$ time, so \ZeroTriangle{} can clearly be solved in $O(m^{3/2})$ time as well.
We improve this bound for all $m$.  

\newcommand{\StatementthmsparseZeroTriangle}{
The decision tree complexity of \ZeroTriangle{} on $m$-edge graphs is
$O(m^{5/4}\sqrt{\log m})$ and, using randomization, $O(m^{5/4})$ with high probability.
The \ZeroTriangle{} problem can be solved in 
$O(m^{3/2} (\log\log m)^2/\log m)$ time deterministically
or $O(m^{3/2}\log\log m/\log m)$ with high probability.}
\begin{theorem}\label{thm:sparse-ZeroTriangle}
\StatementthmsparseZeroTriangle
\end{theorem}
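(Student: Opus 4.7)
My plan is the classical high-degree/low-degree vertex decomposition, reducing sparse \ZeroTriangle{} on $m$-edge graphs to the dense case of Theorem~\ref{thm:ZeroTriangle}. Fix a threshold $\Delta$ to be chosen at the end and partition $V = V_{hi} \sqcup V_{lo}$, where $V_{hi} = \{v : \deg(v) \ge \Delta\}$. Since $\sum_v \deg(v) = 2m$, we have $|V_{hi}| \le 2m/\Delta$, and every triangle in $G$ either lies entirely inside $V_{hi}$ or contains at least one vertex from $V_{lo}$.

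For the \emph{dense case} (triangles inside $V_{hi}$), I invoke Theorem~\ref{thm:ZeroTriangle} on the induced subgraph $G[V_{hi}]$, which has $n' \le 2m/\Delta$ vertices. This costs $O((m/\Delta)^{5/2}\sqrt{\log m})$ comparisons in the decision-tree model and $O((m/\Delta)^{5/2})$ with randomization, and $O((m/\Delta)^3 (\log\log m)^2/\log m)$ deterministic time or $O((m/\Delta)^3 \log\log m/\log m)$ time with high probability. For the \emph{low-degree case}, I iterate over each $v \in V_{lo}$ and solve the per-vertex subproblem: decide whether some $(u,w) \in E \cap N(v)^2$ satisfies $w(u,v) + w(v,w) + w(u,w) = 0$. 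A brute-force enumeration of pairs of neighbors costs $O(\deg(v)^2) \le O(\Delta \cdot \deg(v))$ per vertex, totaling $O(m\Delta)$. Balancing $m\Delta$ against the dense-case RAM cost at $\Delta$ of order $\sqrt{m}/\polylog^{1/4}(m)$ yields the claimed $O(m^{3/2}/\polylog(m))$ algorithmic bounds.

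The main obstacle is the sharper decision-tree bound $O(m^{5/4}\sqrt{\log m})$. With $\Delta = \sqrt{m}$ the dense-case cost already saturates at $\Theta(m^{5/4}\sqrt{\log m})$, so the per-vertex low-degree subproblem must be solved in only $O(\deg(v)^{3/2}\sqrt{\log m})$ comparisons; then $\sum_{v \in V_{lo}} \deg(v)^{3/2} \le \sqrt{\Delta}\sum_v \deg(v) \le 2m^{5/4}$ and both halves match. This per-vertex bound is nontrivial: the subproblem is not a generic \ThreeSUM{} of size $\deg(v)$, because the pair $(u,w)$ is constrained to form an edge of $G[N(v)]$, so Theorem~\ref{thm:3SUM} does not apply as a black box. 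I would address this by peeling from the proof of Theorem~\ref{thm:ZeroTriangle} a ``sparse target-min-plus'' decision tree whose depth scales with the number of defined matrix entries rather than the matrix dimension, and applying it to the $\deg(v)\times\deg(v)$ partial weight matrix supported on $E \cap N(v)^2$ with target values $-w(u,v)-w(v,w)$; the same reduction, paired with the randomized dense decision tree, would give the $O(m^{5/4})$ randomized decision-tree bound.
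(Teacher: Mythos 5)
Your proof of the algorithmic (time) bounds is essentially the paper's argument: the paper also peels off low-degree vertices (it defines the $\Delta$-core iteratively rather than by a one-shot degree threshold, but this is a cosmetic difference), brute-forces the $O(m\Delta)$ local pairs, runs the dense \ZeroTriangle{} algorithm of Theorem~\ref{thm:ZeroTriangle} on the remaining $O(m/\Delta)$ vertices, and balances $\Delta$ exactly as you do. So that half is sound.

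The decision-tree bounds are where your approach diverges from the paper and where the gap lies. The paper does \emph{not} use a high/low-degree decomposition for the decision tree at all. Instead it (i) greedily computes an acyclic orientation with all outdegrees $<\Delta=\sqrt{2m}$, (ii) colors $V$ with $K$ colors so that few pairs of out-edges share an endpoint color, (iii) sorts the \emph{single global} set $D=\{w(u,x)-w(u,x') : (u,x),(u,x')\in\vec{E},\ \colorfunc(x)=\colorfunc(x')\}$ of size $O(m\Delta/K)$ by Fredman's lemma, and then (iv) for each edge $(u,v)$ and each color class $\kappa$ does one binary search of cost $O(\log\Delta)$, using Fredman's trick to infer the needed sorted order from $D$. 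Crucially, the Fredman trick is applied \emph{across all vertices at once}: the comparison $w(u,x)+w(v,x) < w(u,x')+w(v,x')$ is decided by looking up two elements of $D$ indexed by the two \emph{different} apices $u$ and $v$. Your per-vertex decomposition discards exactly this cross-vertex sharing, which is why you are stuck needing the per-vertex claim.

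Concretely, the claim you need — that for each $v$ with $d=\deg(v)$ the subproblem of testing $w(v,u)+w(v,w)+w(u,w)=0$ over all $(u,w)\in E\cap N(v)^2$ admits a decision tree of depth $O(d^{3/2}\sqrt{\log m})$ — is not established, and I do not see how a ``sparse target-min-plus'' tree with that depth could exist. The difficulty is that the $e_v=|E\cap N(v)^2|$ targets $-w(u,w)$ are independent real inputs, and on a no-instance in which every $w(v,u)+w(v,w)$ narrowly misses its target the decision tree must rule each one out. Two disjoint copies of $K_{\Delta}$ with $\Delta\approx\sqrt{m}$ give $e_v=\Theta(d^2)$ for every $v$; $d^{3/2}$ comparisons do not even suffice to certify all $\Theta(d^2)$ near-misses within a single $v$, and there is no sharing between different $v$'s in your scheme to amortize against. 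In short, the per-vertex bound is the load-bearing step, it is not a corollary of Theorem~\ref{thm:target-min-plus}, and the paper avoids needing it entirely by replacing the degree decomposition with the orientation-plus-coloring argument. Your outline should be restructured around a global Fredman set rather than per-vertex subtrees to recover the $O(m^{5/4}\sqrt{\log m})$ and $O(m^{5/4})$ bounds.
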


By invoking the Williams-Williams reduction~\cite{WilliamsW13}, our \ZeroTriangle{} algorithms give subquadratic bounds
on the complexity of \ConvolutionThreeSUM.  By designing \ConvolutionThreeSUM{} algorithms from scratch we can
obtain speedups comparable to those of Theorem~\ref{thm:ZeroTriangle}.

\newcommand{\StatementthmConvolutionThreeSUM}{The decision tree complexity of \ConvolutionThreeSUM{} is $O(n^{3/2}\sqrt{\log n})$
and its randomized decision tree complexity is $O(n^{3/2})$ with high probability.
The \ConvolutionThreeSUM{} problem can be solved  
in $O(n^2 (\log\log n)^2 / \log n)$ time deterministically, or in $O(n^2 \log\log n/\log n)$ time with high probability.}
\begin{theorem}\label{thm:ConvolutionThreeSUM}
\StatementthmConvolutionThreeSUM
\end{theorem}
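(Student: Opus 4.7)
The plan is to mirror the architecture of the proofs of Theorems~\ref{thm:3SUM} and~\ref{thm:ZeroTriangle}, while exploiting the extra structure of \ConvolutionThreeSUM{}: the third element's index is forced to be $i+j$, so that fixing a pair of indices determines a unique target value. I would partition the index range $[n]$ into $g$ consecutive blocks $I_1,\ldots,I_g$ of equal length $s = n/g$. The key observation is that for $i\in I_a$ and $j\in I_b$, the target index $i+j$ lies in $I_{a+b-1}\cup I_{a+b}$, so the global problem decomposes into $O(g^2)$ asymmetric 3-set sub-instances, each on sets of size $O(s)$.

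First I would sort each block by value in $O(n\log s)$ total comparisons. Then, for each pair of blocks, I would invoke the Fredman-style machinery underlying Theorem~\ref{thm:3SUM}: second-order comparisons of the form ``is $A(i)-A(i')$ greater or less than $A(j')-A(j)$?'' allow one to resolve the relative order of many pair-sums within a block pair using very few comparisons, and these comparisons can be amortized across all block pairs that share a block. Setting $s = \Theta(\sqrt{n/\log n})$ and $g = \Theta(\sqrt{n\log n})$, each pair of blocks should cost $\tilde O(s)$ comparisons, giving total decision tree depth $O(g^2\cdot s) = O(n^{3/2}\sqrt{\log n})$. The randomized decision tree bound $O(n^{3/2})$ would follow by replacing the deterministic sorting subroutine by a randomized one that saves the $\sqrt{\log n}$ factor with high probability, exactly as in Theorem~\ref{thm:3SUM}.

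For the RAM bounds I would simulate each batch of comparisons by word-packed operations: the sorted order within a block fits in $O(s\log s)$ bits, so a single machine word can emulate $\Theta(\log n/\log\log n)$ comparisons through $AC^0$ bit tricks and $O(1)$-time table lookups, the tables being precomputed in $o(n)$ time (deterministic) or replaced by universal hashing (randomized). Plugging this into the decision-tree framework yields the bounds $O(n^2(\log\log n)^2/\log n)$ deterministically and $O(n^2\log\log n/\log n)$ randomized, matching the pattern of the analogous results in Theorems~\ref{thm:3SUM} and~\ref{thm:ZeroTriangle}.

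The main obstacle is establishing the amortization claim. A naive per-block-pair algorithm needs $\Omega(s^2)$ comparisons to sort the $s^2$ pair-sums, giving $O(g^2 s^2) = \Omega(n^2)$ total and no improvement over the trivial bound. The savings must therefore come from a sharing lemma showing that, once all intra-block sorted orders and differences are computed, each new block pair can be decided using only $\tilde O(s)$ additional queries. Formulating this sharing lemma cleanly, and in particular dealing with the boundary case where $i+j$ straddles the two target blocks $I_{a+b-1}$ and $I_{a+b}$, is where I expect the bulk of the technical work to lie.
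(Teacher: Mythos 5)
Your coarse architecture for the deterministic decision tree bound --- partition the index range into blocks, sort all blocks simultaneously via Fredman's trick on a small difference set, then binary search each relevant block along the $k$th antidiagonal --- does match how the paper obtains the $O(n^{3/2}\sqrt{\log n})$ bound, which it gets by directly invoking the construction of Section~\ref{sect:nonuniform}. (Note, though, that your parameters are inverted: the block size must be $\Theta(\sqrt{n\log n})$, not $\Theta(\sqrt{n/\log n})$; with your $s$, the binary-search term $n\cdot(n/s)\log s$ evaluates to $\Theta(n^{3/2}(\log n)^{3/2})$, so your accounting does not actually reach the claimed bound.)

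The substantive gap is the randomized $O(n^{3/2})$ decision tree bound. You assert it ``follows exactly as in Theorem~\ref{thm:3SUM},'' but Theorem~\ref{thm:3SUM} contains no randomized decision tree bound at all: the paper never proves $O(n^{3/2})$ for 3SUM, and the reason is instructive --- the $A+B$ entries you would search for a given $c$ share rows and columns in uncontrolled ways, so the hierarchical sampling analysis does not decouple. The essential observation for \ConvolutionThreeSUM{} is that all entries along a single antidiagonal of $A+A$ lie in pairwise distinct rows and pairwise distinct columns; when rows and columns are promoted independently at each of the $\log\log n$ sampling levels (as in Theorem~\ref{thm:target-min-plus-rand}), the survival events of the antidiagonal entries are therefore mutually independent, which is exactly what yields $O(1)$ expected comparisons per refinement level and kills the $\sqrt{\log n}$. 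Your sketch never identifies this property, yet it is the one place where the extra structure of \ConvolutionThreeSUM{} is actually used, so the $O(n^{3/2})$ and $O(n^2\log\log n/\log n)$ bounds remain unreachable as written. Relatedly, the RAM algorithms sort the small $g\times g$ blocks, $g=\Theta(\log n/\log\log n)$, via bichromatic dominance reporting (Lemma~\ref{lem:redblue}) plus precomputed lookup tables, not by word-packed comparison emulation, and universal hashing plays no role --- that is the tool of Baran, Demaine and \Patrascu's \IntegerThreeSUM{} algorithm, a different attack in a different model.
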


\subsection{An Overview}

All of our algorithms borrow liberally from Fredman's 1976 articles on the decision tree complexity of $(\min,+)$-matrix multiplication~\cite{F76}
and the complexity of sorting $X+Y$~\cite{Fredman76}.  Throughout the paper we shall refer to the ingenious observation 
that $a+b < c+d$ iff $a-c < d-b$ as {\em Fredman's trick}.\footnote{Noga Alon (personal communication) remarked that this trick dates
back to \Erdos{} and \Turan~\cite{ErdosT41}, {\em if not further!}}
In order to shave off $\poly(\log n)$ factors in runtime we apply the
geometric domination technique invented by Chan~\cite{Chan08}
and developed further by 
Bremner, Chan, Demaine, Erickson, Hurtado, Iacono, 
Langerman, \Patrascu, and Taslakian~\cite{BremnerCDEHILPT14}.

In Section~\ref{sect:usefullemmas} we review a number of useful lemmas due to Fredman~\cite{Fredman76}, Buck~\cite{Buck43},
and Chan~\cite{Chan08} about sorting with partial information,
the complexity of hyperplane arrangements, and the complexity of dominance reporting in $\mathbb{R}^d$.
In Section~\ref{sect:quadratic3SUM} we review a standard $O(n^2)$-time \ThreeSUM{} algorithm
and in Section~\ref{sect:nonuniform} we present an $\tilde{O}(n^{3/2})$-depth decision tree for \ThreeSUM.
Subquadratic algorithms for \ThreeSUM{} are presented in Section~\ref{sect:uniform}.
Section~\ref{sect:lineardegeneracy} presents new bounds on the decision tree complexity of $k$-\LDT{} for odd $k\ge 3$.
Section~\ref{sect:ZeroTriangle} presents new bounds on the decision tree and algorithmic complexity of \ZeroTriangle{}
and \ConvolutionThreeSUM. 
Section~\ref{sect:conclusion} concludes with some open problems.

\section{Useful Lemmas}\label{sect:usefullemmas}

Fredman~\cite{Fredman76} considered the problem of sorting a list of $n$ numbers known to be arranged in one of
$\Pi \le n!$ permutations.
When $\Pi$ is sufficiently small the list can be sorted using a linear number of comparisons.

\begin{lemma}\label{lem:Fredman} (Fredman 1976~\cite{Fredman76})
A list of $n$ numbers whose sorted order is one of $\Pi$ permutations can be sorted with
$2n+\log\Pi$ pairwise comparisons.
\end{lemma}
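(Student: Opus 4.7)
The plan is to use an adaptive binary insertion sort. I would process the elements $x_1, \ldots, x_n$ in order, maintaining a sorted list of the processed elements together with the subset $\Pi^{(i-1)} \subseteq \Pi$ of permutations still consistent with the relative order learned so far. When inserting $x_i$, only $p_i \le i$ of the $i$ candidate insertion positions are consistent with some $\pi \in \Pi^{(i-1)}$, and I would locate the correct one by binary-searching over just these $p_i$ positions, at cost $\lceil \log p_i \rceil \le 1 + \log p_i$ comparisons.

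Summing over all $n$ insertions, the total is at most $n + \log \prod_{i=1}^n p_i$, so the task reduces to showing $\log \prod_i p_i \le n + \log \Pi$ along any root-to-leaf path in the algorithm's decision tree; combined with the first inequality this gives the stated bound of $2n + \log \Pi$. Equivalently, one must establish the combinatorial claim $\max_\pi \prod_i p_i(\pi) \le 2^n \cdot \Pi$, where $p_i(\pi)$ is the branching factor at level $i$ along the path leading to the leaf corresponding to $\pi$.

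For this combinatorial claim I would appeal to a Kraft-style argument. The algorithm produces a binary decision tree with exactly $\Pi$ leaves, and each leaf $\pi$ has depth $d(\pi) = \sum_i \lceil \log p_i(\pi) \rceil \le n + \log \prod_i p_i(\pi)$; Kraft's inequality then gives $\sum_\pi 2^{-d(\pi)} \le 1$. Extracting the desired upper bound on $\max_\pi d(\pi)$ is the main obstacle, since Kraft's inequality directly yields only the lower bound $\max_\pi d(\pi) \ge \log \Pi$. To close the gap I would either exploit structural properties of insertion-sort trees---in particular that the branching factor at level $i$ never exceeds $i$, which limits the degree of imbalance---or refine the algorithm: precede each binary search with a constant-cost verification against the most popular consistent position and fall back to full binary search only on failure. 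An amortized analysis then charges the $2n$ overhead to verification checks (at most two comparisons per insertion) and the remaining $\log \Pi$ to the binary-search failures needed to pin down which of the $\Pi$ permutations is the input.
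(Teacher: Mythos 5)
Your framework---adaptive binary insertion sort while maintaining the set $S_{i-1} = \Pi^{(i-1)}$ of still-consistent permutations---is exactly Fredman's, but the search step is wrong, and this is not a small gap. You binary-search \emph{uniformly} over the $p_i$ consistent insertion positions, paying $\lceil \log p_i\rceil$ per step, and then need the combinatorial claim $\prod_i p_i \le 2^n\Pi$. That claim is false. Along a path where every insertion position is consistent, i.e.\ $p_i = i$ for all $i$, one has $\prod_i p_i = n!$, while the only structural consequence is $|S_{i-1}| \ge |S_i| + (p_i - 1)$, which forces merely $\Pi = |S_0| \ge 1 + \sum_{i=1}^{n}(i-1) = \Theta(n^2)$. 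Then $2^n\Pi = \Theta(2^n n^2)$, which is exponentially smaller than $n!$. The constraint $p_i \le i$ and the Kraft inequality (which, as you note, gives a \emph{lower} bound on the max depth, not an upper bound) cannot rescue this: an unweighted search genuinely costs too much when one branch carries almost all the surviving permutations.

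The repair is to make the search \emph{weight-balanced}, not merely restricted to consistent positions. Assign position $j$ the weight $w_j = $ (number of $\pi \in S_{i-1}$ that place $x_i$ at position $j$), so $\sum_j w_j = |S_{i-1}|$, and search via a near-optimal \emph{alphabetic} binary tree over the ordered positions (split at the weighted median, recurse). Standard alphabetic-code bounds give that the true position $j^{*}$, which has weight $w_{j^*} = |S_i|$, is found in at most $\log(|S_{i-1}|/|S_i|) + 2$ comparisons. Summing over $i$, the logarithmic terms telescope to $\log|S_0| - \log|S_n| = \log\Pi$, and the additive $2$ per step contributes $2n$, giving $2n + \log\Pi$. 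Your closing remark about ``verification against the most popular consistent position'' is a gesture toward exactly this idea, but a single popularity check followed by uniform fallback does not suffice; the biasing must pervade the entire search so that each comparison roughly halves the surviving \emph{weight} (number of consistent permutations), not the surviving \emph{number of positions}.
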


Throughout the paper $[N]$ denotes the first $\ceil{N}$ natural numbers $\{0,\ldots,\ceil{N}-1\}$, where $N$ may or may not be an integer.
We apply Lemma~\ref{lem:Fredman} to the problem of sorting {\em Cartesian sums}.  
Given lists $A = (a_i)_{i\in[n]}$ and $B=(b_i)_{i\in [n]}$ of distinct numbers, 
define
$A+B = \{a_i + b_j \:|\: i,j \in [n]\}$.
We often regard $A+B$ as an $|A|\times |B|$ matrix (which may contain multiple copies of the same number) 
or as a point in the $2n$-dimensional space $\mathbb{R}^{2n}$, whose coordinates are named $x_1,\ldots,x_n,y_1,\ldots,y_n$.
The points in $\mathbb{R}^{2n}$ that agree with a fixed permutation of $A+B$ form a convex cone bounded 
by the ${n^2 \choose 2}$ hyperplanes $H = \{x_i + y_j - x_k - y_l \:|\: i,j,k,l\in[n]\}$.
The sorted order of $A+B$ is encoded as a sign vector $\{-1,0,1\}^{n^2\choose 2}$
depending on whether $(A,B)$ lies on, above, or below a particular hyperplane in $H$.  Therefore the number
of possible sorted orders of $A+B$ is exactly the number of regions (of all dimensions) defined by the arrangement $H$.
(Regions of dimension less than $2n$ correspond to instances in which some numbers appear multiple times.)

\begin{lemma}\label{lem:Buck}
(Buck 1943~\cite{Buck43})
Consider the partition of space defined by an arrangement of $m$ hyperplanes in $\mathbb{R}^d$.
The number of regions of dimension $k\le d$ is at most
\[
{m\choose d-k}\left({m - d+k \choose 0} + {m - d+k \choose 1} + \cdots + {m - d+k \choose k}\right)
\]
and the number of regions of all dimensions is $O(m^d)$.
\end{lemma}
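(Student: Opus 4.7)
The plan is to reduce the general case to the classical Schläfli bound on the number of top-dimensional cells of a hyperplane arrangement, then count $k$-dimensional cells by stratifying them according to the $k$-dimensional flats on which they lie.

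First I would handle the top-dimensional case $k=d$ by induction on $m+d$, proving the number of full-dimensional regions is at most $\sum_{i=0}^{d}\binom{m}{i}$. The inductive step is the standard one: adding the $m$-th hyperplane $H$ to an arrangement of $m-1$ hyperplanes in $\mathbb{R}^d$, the other hyperplanes induce a $(d-1)$-dimensional arrangement of at most $m-1$ hyperplanes on $H$, which by induction carves $H$ into at most $\sum_{i=0}^{d-1}\binom{m-1}{i}$ cells, each of which splits one existing full-dimensional cell into two. Pascal's identity $\binom{m-1}{i}+\binom{m-1}{i-1}=\binom{m}{i}$ closes the induction and matches the claimed formula when $k=d$, since in that case $\binom{m}{d-k}=1$ and $m-d+k=m$.

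Next I would reduce the general case to the case already handled. Any $k$-dimensional region $R$ is an open subset of a unique affine flat $F_R$ of dimension exactly $k$, namely the affine hull of its closure, which coincides with the intersection of all hyperplanes of the arrangement that contain $R$. Since $\dim F_R=k$, there must exist a subset of $d-k$ hyperplanes whose intersection equals $F_R$; for each $k$-dimensional cell select any such subset. The number of such subsets is at most $\binom{m}{d-k}$, so the number of distinct $k$-dimensional flats arising this way is also at most $\binom{m}{d-k}$. Fix one such flat $F$, obtained as the intersection of a chosen $(d-k)$-subset. The remaining $m-d+k$ hyperplanes meet $F$ either in all of $F$ (contributing nothing to subdivision) or in a proper affine subspace of $F$, yielding an arrangement of at most $m-d+k$ hyperplanes inside $F\cong\mathbb{R}^k$. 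The $k$-dimensional cells of the original arrangement lying on $F$ are precisely the top-dimensional cells of this induced arrangement, so by step one they number at most $\sum_{i=0}^{k}\binom{m-d+k}{i}$. Multiplying by the bound on the number of $k$-flats yields the claimed formula.

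Finally, for the $O(m^d)$ total bound I would sum over $k$. Each term satisfies $\binom{m}{d-k}=O(m^{d-k})$ and $\sum_{i=0}^{k}\binom{m-d+k}{i}=O(m^{k})$, so each of the $d+1$ summands is $O(m^d)$, giving $O(m^d)$ overall for fixed $d$. The main obstacle I expect is bookkeeping in the non-generic case: a given $k$-cell may lie on more than $d-k$ hyperplanes, and a given $(d-k)$-subset of hyperplanes may intersect in a flat of dimension greater than $k$. Both issues cause only over-counting, which is harmless for an upper bound, but one must be careful to argue that every $k$-cell is counted at least once (via \emph{some} $(d-k)$-subset whose intersection equals its affine hull) and that the induced arrangement on a $k$-flat has at most $m-d+k$ contributing hyperplanes.
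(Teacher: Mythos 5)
The paper does not prove this lemma; it is stated as a citation to Buck (1943) and used as a black box. Your proof is a correct and essentially classical reconstruction of Buck's argument: the Schl\"afli-type induction for full-dimensional cells, followed by the standard reduction of the count of $k$-cells to counts of top-dimensional cells in arrangements induced on $k$-flats, with each $k$-flat charged to a $(d-k)$-subset of hyperplanes whose normals form a basis for the orthogonal complement. One cosmetic remark: when you say the remaining hyperplanes meet $F$ ``either in all of $F$ or in a proper affine subspace,'' you should also allow the disjoint (parallel) case, which likewise contributes nothing to the induced arrangement, so the bound of at most $m-d+k$ induced hyperplanes stands. Your handling of the non-generic over-counting issues is exactly the right observation and is all that is needed for an upper bound.
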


In one of our algorithms we will construct the hyperplane arrangement explicitly.
Edelsbrunner, O'Rourke, and Seidel~\cite{EdelsbrunnerOS86} proved that the natural
incremental algorithm takes $O(m^d)$ time (linear in the size of the arrangement),
but any trivial $m^{O(d)}$-time algorithm suffices in our application.
The hyperplane arrangements we use correspond to fragments of the Cartesian sum $A+B$.
Lemma~\ref{lem:sortX+Y} is a direct consequence of Lemmas~\ref{lem:Fredman} and \ref{lem:Buck}.

\begin{lemma}\label{lem:sortX+Y}
Let $A=(a_i)_{i\in[n]}$ and $B=(b_i)_{i\in [n]}$ be two lists of numbers
and let $F\subseteq [n]^2$ be a set of positions in the $n\times n$ grid.
The number of realizable orders of
$(A+B)_{|F} \bydef \{a_i + b_j \:|\: (i,j) \in F\}$ is $O\mathopen{}\left({|F|\choose 2}^{2n}\right)\mathclose{}$
and therefore $(A+B)_{|F}$ can be sorted with at most $2|F| + 2n\log|F| + O(1)$ comparisons.
\end{lemma}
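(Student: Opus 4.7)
The plan is to view both $A$ and $B$ simultaneously as a single point $(x_1,\ldots,x_n,y_1,\ldots,y_n) \in \mathbb{R}^{2n}$, and then to observe that the sorted order of $(A+B)_{|F}$ is entirely controlled by the signs of a collection of linear forms in these $2n$ variables. Specifically, for every unordered pair of distinct positions $\{(i,j),(k,l)\} \subseteq F$, the comparison between $a_i+b_j$ and $a_k+b_l$ is determined by the sign of the linear form $(x_i+y_j)-(x_k+y_l)$. This gives at most $m = \binom{|F|}{2}$ linear forms, hence at most $m$ hyperplanes in $\mathbb{R}^{2n}$. Two tuples $(A,B)$ lying in a common region of this arrangement (of any dimension) induce exactly the same ordering on $(A+B)_{|F}$, so the number of realizable orderings is at most the number of regions.

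Next I would invoke Lemma~\ref{lem:Buck} with $d=2n$: the arrangement of $m = \binom{|F|}{2}$ hyperplanes in $\mathbb{R}^{2n}$ has $O(m^{2n}) = O\bigl(\binom{|F|}{2}^{2n}\bigr)$ regions of all dimensions. This immediately yields the first claim of the lemma, with $\Pi \le O\bigl(\binom{|F|}{2}^{2n}\bigr)$.

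For the second claim, I would feed this bound into Fredman's sorting lemma (Lemma~\ref{lem:Fredman}), using list length $|F|$ in place of $n$ in that lemma. The number of comparisons is at most $2|F| + \log\Pi$, and $\log \Pi \le 2n\log\binom{|F|}{2} + O(1) = O(n\log|F|)$, giving the stated bound of $2|F| + 2n\log|F| + O(1)$ comparisons.

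There is really no hard step here; the proof is just a clean application of the two preceding lemmas. The only place to be careful is ensuring that the counting includes lower-dimensional regions of the arrangement, since realized orderings may contain ties (equalities $a_i+b_j = a_k+b_l$ place $(A,B)$ on a hyperplane rather than strictly inside a full-dimensional cell). Lemma~\ref{lem:Buck} is stated in exactly the form that covers this, so no extra work is needed.
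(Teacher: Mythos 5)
Your proof is correct and takes exactly the route the paper intends: the paper itself merely remarks that ``Lemma~\ref{lem:sortX+Y} is a direct consequence of Lemmas~\ref{lem:Fredman} and \ref{lem:Buck},'' relying on precisely the setup you spell out (treating $(A,B)$ as a point in $\mathbb{R}^{2n}$, one hyperplane per unordered pair of positions in $F$, counting regions of all dimensions to capture ties, then feeding $\log\Pi$ into Fredman's lemma). One small arithmetic note: from $\Pi = O\bigl(\binom{|F|}{2}^{2n}\bigr)$ you get $\log\Pi \le 2n\log\binom{|F|}{2} + O(1) \le 4n\log|F| + O(1)$, so the cleanest reading of your derivation yields $2|F| + 4n\log|F| + O(1)$ rather than the stated $2|F| + 2n\log|F| + O(1)$; the paper's constant in that lower-order term appears to be similarly loose, and the factor of two is immaterial to every use of the lemma, but it is worth being aware that it does not drop out of the bound as written.
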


It is sometimes convenient to assume that the elements of a Cartesian sum
are distinct (and therefore have exactly one sorted order), even though numbers 
may appear multiple times.  Lemma~\ref{lem:tiebreaking} illustrates one way to break ties consistently.
The proof is straightforward.

\begin{lemma}\label{lem:tiebreaking}
Let $A=(a_i)$ and $B=(b_i)$ be two lists of numbers.  Define
$a_i' = (a_i, i, 0)$ and $b_j' = (b_j,0,j)$.
The Cartesian sum $A' + B'$ is totally ordered, and is 
a linear extension of the partially ordered $A+B$.
(Addition over tuples is pointwise addition;
tuples are ordered lexicographically.
The tuple $(u,v,w)$ can be regarded as a representation of a real number $u + \epsilon_1 v + \epsilon_2 w$ where $\epsilon_1 \gg \epsilon_2$
are sufficiently small so as not to invert strictly ordered elements of $A+B$.)
\end{lemma}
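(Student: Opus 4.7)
The plan is to verify the two assertions directly from the definition of the embedded tuples. First I compute explicitly: $a_i' + b_j' = (a_i + b_j,\; i,\; j)$, using that addition on tuples is pointwise. To see that $A' + B'$ is totally ordered under lexicographic comparison, I note that each pair of indices $(i,j) \in [n]^2$ produces a tuple with the unique second and third coordinates $(i,j)$; hence $(a_i + b_j, i, j) = (a_k + b_l, k, l)$ forces $(i,j) = (k,l)$. Thus all elements of $A' + B'$ are pairwise distinct, and since lexicographic order on distinct tuples is total, this gives the total order.

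Next, for the linear-extension claim, I recall that the partial order on $A+B$ is the one induced by the natural order on $\mathbb{R}$: $(i,j) \prec (k,l)$ iff $a_i + b_j < a_k + b_l$, with pairs tied in sum left incomparable. So suppose $a_i + b_j < a_k + b_l$ strictly. Then in comparing $(a_i + b_j, i, j)$ with $(a_k + b_l, k, l)$ lexicographically, the first coordinate already decides the comparison in the correct direction; this proves the total order on $A' + B'$ respects every strict inequality of $A+B$. Pairs that are incomparable in $A+B$ (those with identical sums) are resolved by the trailing coordinates in some consistent way, which is exactly what it means for $A' + B'$ to be a linear extension.

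For the parenthetical identification of $(u,v,w)$ with the real number $u + \epsilon_1 v + \epsilon_2 w$, I would argue that it suffices to choose $\epsilon_1, \epsilon_2 > 0$ small enough that the perturbation cannot reverse any strict inequality present in $A+B$. Concretely, let $\delta$ be the minimum positive gap $|(a_i + b_j) - (a_k + b_l)|$ over index pairs with distinct sums (a finite positive quantity); any $\epsilon_1 < \delta/(4n)$ and $\epsilon_2 < \epsilon_1/(4n)$ work, since the total perturbation to any sum is bounded by $(\epsilon_1 + \epsilon_2)n < \delta/2$. There is no genuine obstacle here — the lemma is a straightforward encoding, and the only content is the observation that distinct index pairs force distinct tuples via the last two coordinates.
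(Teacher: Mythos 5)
The paper does not actually give a proof of this lemma; it simply remarks that ``the proof is straightforward.'' Your write-up is a correct, complete unwinding of that remark: the key observation $a_i' + b_j' = (a_i+b_j, i, j)$, the injectivity of the index pair in the trailing coordinates (forcing total order), and the fact that the first lexicographic coordinate preserves every strict inequality of $A+B$ are exactly the points that need to be made, and your quantitative bound on $\epsilon_1, \epsilon_2$ is a valid instantiation of the parenthetical ``sufficiently small'' clause.
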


Given a set $P$ of red and blue points in $\mathbb{R}^d$, 
the {\em bichromatic dominating pairs} problem is 
to enumerate every pair $(p,q)\in P^2$ such that $p$ is red, $q$ is blue,
and $p$ is greater than $q$ at each of the $d$ coordinates.
A natural divide and conquer algorithm~\cite[p.~366]{PreparataShamos85} runs in time linear in the output size and 
$O(n\log^d n)$.  Chan~\cite{Chan08} provided an improved analysis when $d$ is logarithmic in $n$.
For the sake of completeness we give a short proof of Lemma~\ref{lem:redblue} in Appendix~\ref{sect:redblue}.

\begin{lemma}\label{lem:redblue} {\bf (Bichromatic Dominance Reporting~\cite{Chan08})}
Given a set $P\subseteq \mathbb{R}^d$ of red and blue points, it is
possible to return all bichromatic dominating pairs $(p,q)\in P^2$
in time linear in the output size and $c_\epsilon^d |P|^{1+\epsilon}$.
Here $\epsilon\in (0,1)$ is arbitrary and $c_\epsilon = 2^\epsilon/(2^{\epsilon}-1)$.
\end{lemma}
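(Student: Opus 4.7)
The plan is to adapt the classical coordinate-splitting divide and conquer. Sort $P$ by its $d$-th coordinate and split $P = P_{\hi} \cup P_{\lo}$ at the median into halves of size $|P|/2$. Recurse on $(P_{\hi}, d)$ and $(P_{\lo}, d)$ to report every bichromatic dominating pair whose endpoints lie in the same half. The remaining bichromatic dominating pairs $(p,q)$ have $p$ a red point of $P_{\hi}$ and $q$ a blue point of $P_{\lo}$; since $p_d > q_d$ holds automatically for such pairs, they are exactly the bichromatic dominating pairs of the $(d-1)$-dimensional instance obtained by projecting the red points of $P_{\hi}$ and the blue points of $P_{\lo}$ onto the first $d-1$ coordinates, which we solve by one recursive call in one lower dimension.

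Let $T(n,d)$ be the worst-case running time, excluding the $O(1)$-per-pair cost of output. Since each dominating pair is reported at exactly one node of the recursion tree (the topmost node at which the split of some coordinate separates the two endpoints in the right direction), the total output charge is $O(\text{output})$ and can be accounted for separately. The divide-and-conquer scheme yields
\[
T(n,d) \;\leq\; 2\,T(n/2, d) \;+\; T(n, d-1) \;+\; O(n),
\]
with trivial base cases $T(1,d) = O(1)$ and $T(n, 0) = O(n)$.

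The next step is to verify by induction on $d$ that $T(n,d) \leq C\, c_\epsilon^d\, n^{1+\epsilon}$ for a constant $C = C(\epsilon)$ that does not depend on $d$. Substituting the ansatz into the recurrence and dividing through by $C n^{1+\epsilon}$ reduces the inductive step to the inequality
\[
c_\epsilon^d \;\geq\; 2^{-\epsilon}\, c_\epsilon^d \;+\; c_\epsilon^{d-1} \;+\; O\!\left(\frac{1}{C\, n^\epsilon}\right).
\]
Since $c_\epsilon = 2^\epsilon/(2^\epsilon - 1)$ satisfies $1 - 2^{-\epsilon} = 1/c_\epsilon$, the first two right-hand terms already equal $c_\epsilon^d$ exactly, and choosing $C$ sufficiently large absorbs the $O(n)$ additive slack for all $n \geq 1$ (with the base cases handled by brute force in time bounded by $C c_\epsilon^d$).

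The main ``obstacle'' here is bookkeeping rather than ideas. The delicate points are ensuring that the per-pair output cost is genuinely separable from $T(n,d)$ (which relies on the observation above that each pair is reported at a unique recursion node), and that the inductive constant $C$ can be chosen as a function of $\epsilon$ alone rather than of $d$. Once the algebraic identity $1 - 2^{-\epsilon} = 1/c_\epsilon$ is noted, it dictates the correct ansatz and makes the induction go through.
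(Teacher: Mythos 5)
Your proposal is essentially the paper's own argument: the same median split on the last coordinate, the same three-way recursion (two same-dimension calls on halves, one call in dimension $d-1$ on the cross pairs), the same recurrence $T(n,d) \le 2T(n/2,d)+T(n,d-1)+O(n)$, and the same inductive ansatz exploiting $1-2^{-\epsilon}=1/c_\epsilon$. The one place you gloss over a detail the paper handles explicitly is the claim that ``$p_d > q_d$ holds automatically'' for a red $p\in P_{\hi}$ and blue $q\in P_{\lo}$: when several points share the median value, a dominating pair with $p_d=q_d$ could have the red point land in $P_{\lo}$ and the blue in $P_{\hi}$ and then never be reported; the paper avoids this by breaking ties so that blue points precede red points in the sort order, ensuring every such pair is caught either within one half or by the cross-call.
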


We typically invoke Lemma~\ref{lem:redblue} with $\epsilon=1/2, c_\epsilon \approx 3.42,$ and
$d = \delta \log n$, where $\delta >0$ is sufficiently small
to make the running time subquadratic, excluding the time allotted to reporting the output.

\section{The Quadratic \ThreeSUM{} Algorithm}\label{sect:quadratic3SUM}

We shall review a standard $O(n^2)$ algorithm for the three-set version of \ThreeSUM{} and introduce some terminology used in
Sections~\ref{sect:nonuniform} and \ref{sect:uniform}.
We are given sets $A,B,C \subset \mathbb{R}$ and must determine if there exists $a\in A, b\in B, c\in C$ such
that $a+b+c=0$.  For each $c\in C$ the algorithm searches for $-c$ in the Cartesian sum $A+B$.
Each search takes $O(|A|+|B|)$ time, for a total of $O(|C|(|A|+|B|))$.  We view $A+B$ as being a matrix
whose rows correspond to $A$ and columns correspond to $B$, both listed in increasing order.

\setdescription{leftmargin=1.5cm,labelindent=\parindent}
\begin{description}
\addtolength{\itemsep}{-0.5\baselineskip}
\item[1.$\;$] Sort $A$ and $B$ in increasing order as $A(0)\ldots A(|A|-1)$ and $B(0)\ldots B(|B|-1)$.
\item[2.$\;$] For each $c\in C$,
\item [2.1.$\;\;$] Initialize $\lo \leftarrow 0$ and $\hi \leftarrow |B|-1$.
\item [2.2.$\;\;$] Repeat:
\item [2.2.1.$\;\;\;\;$] If $-c = A(\lo) + B(\hi)$, report witness ``$(A(\lo), B(\hi), c)$''
\item [2.2.2.$\;\;\;\;$] If $-c < A(\lo) + B(\hi)$ then decrement $\hi$, otherwise increment $\lo$.
\item [2.3.$\;\;\;$] Until $\lo = |A|$ or $\hi = -1$.
\item[3.$\;$] If no witnesses were found report ``no witness.''
\end{description}

Note that when a witness is discovered in Step 2.2.1 the algorithm continues to search for more witnesses involving $c$.
Since the elements in each row and each column of the $A+B$ matrix are distinct, it does not matter whether we increment 
$\lo$ or decrement $\hi$ after finding a witness.  We choose to increment $\lo$ in such situations; this choice is reflected in Lemma~\ref{lem:contour} 
and its applications in Sections~\ref{sect:rand-param} and \ref{sect:det-param}.

Define the {\em contour} of $x$, $\contour(x,A+B)$, to be the sequence of positions $(\lo,\hi)$ encountered while searching for $x$ in $A+B$. 
When $A+B$ is understood from context we will write it as $\contour(x)$.
If $A+B$ is viewed as a topographic map, with the lowest point in the NW corner and highest point in the SE corner, $\contour(x)$ represents the
path taken by an agent attempting to stay as close to altitude $x$ as possible, starting in the NE corner (at position $(0,|B|-1)$)
and ending when it falls off the western or southern side of the map.  Lemma~\ref{lem:contour} is straightforward.

\begin{lemma}\label{lem:contour}
Every occurrence of $x$ in $A+B$ lies on $\contour(x)$.
Let $y=(A+B)(i,j)$ be any element of $A+B$.
Then $y>x$ iff either $(i,j)$ lies strictly below $\contour(x)$
or both $(i,j)$ and $(i,j-1)$ lie on $\contour(x)$.
Similarly, $y\le x$ iff either $(i,j)$ lies strictly above $\contour(x)$
or both $(i,j)$ and $(i+1,j)$ lie on $\contour(x)$.
\end{lemma}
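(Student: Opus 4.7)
The plan is to first establish that every occurrence of $x$ lies on the contour via a monotonicity invariant, and then prove the two ``iff'' characterizations by case analysis on where $(i,j)$ sits relative to the staircase traced by the contour.

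For the first claim, I would maintain the invariant that at every step of the search, every occurrence $(i,j)$ with $A(i)+B(j)=x$ satisfies $\lo \leq i$ and $\hi \geq j$. Inductively: when $\hi$ is decremented at position $(\lo,\hi)$ it is because $A(\lo)+B(\hi) > x$, and for any occurrence with $i \geq \lo$, monotonicity of $A$ gives $A(i)+B(\hi) \geq A(\lo)+B(\hi) > x = A(i)+B(j)$, forcing $j < \hi$; the $\lo$-increment case is symmetric. Tightening this at the step the contour first advances past $\lo = i$ (where $A(i)+B(\hi) \leq x$ while the invariant still guarantees $j \leq \hi$) forces $j = \hi$, so $(i,j)$ is actually visited by the contour at that moment.

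For the second claim, I would split by where $(i,j)$ lies relative to the monotone staircase. If $(i,j)$ is strictly below, let $(l,j)$ be the last contour point in column $j$; by definition $l < i$, and the contour steps west out of $(l,j)$, which by the algorithm's comparison means $A(l)+B(j) > x$, so $y = A(i)+B(j) > A(l)+B(j) > x$. If $(i,j)$ and $(i,j-1)$ both lie on the contour, the contour stepped west at $(i,j)$, again yielding $A(i)+B(j) > x$. Conversely, if $y > x$ I would rule out the two remaining possibilities: if $(i,j)$ is strictly above the contour, letting $(i,j^{\min})$ be the westernmost contour point in row $i$ gives $j < j^{\min}$, and the south-step at $(i,j^{\min})$ forces $A(i)+B(j^{\min}) \leq x$, whence $y < A(i)+B(j^{\min}) \leq x$; if $(i,j)$ is on the contour but $(i,j-1)$ is not, then the contour stepped south at $(i,j)$, directly giving $y \leq x$. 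Both contradict $y > x$.

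The third claim follows by an entirely symmetric argument exchanging the roles of rows and columns (and of south and west moves). The one detail to be careful about is the behavior of the contour when it exits the matrix on the south or west side; I would handle this uniformly by treating the contour as a monotone path extending to virtual cells $(|A|,h)$ or $(l,-1)$ just outside the grid, so that ``$(i,j-1)$ lies on the contour'' remains well-defined when $j=0$ (and similarly for $(i+1,j)$ when $i=|A|-1$). With this convention the proof is a mechanical case-check, consistent with the lemma's characterization as ``straightforward,'' and I do not anticipate any serious obstacle beyond this bookkeeping.
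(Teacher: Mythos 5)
The paper states this lemma without proof (it is labeled ``straightforward''), so there is no argument of record to compare against; yours supplies the missing details and is essentially correct. Two points of precision are worth making explicit. First, the invariant as phrased --- \emph{every} occurrence $(i,j)$ satisfies $\lo\le i$ and $\hi\ge j$ --- cannot literally survive a visit to a witness, since the algorithm then increments $\lo$ past $i$; and the two update cases are not in fact symmetric, because a $\hi$-decrement is triggered by the strict inequality $A(\lo)+B(\hi)>x$ while a $\lo$-increment is triggered by the non-strict $A(\lo)+B(\hi)\le x$, which leaves room for an occurrence in row $\lo$. The accurate formulation is that for each fixed occurrence $(i,j)$, the bounds $\lo\le i$ and $\hi\ge j$ hold up to the step at which the contour reaches $(i,j)$; this is what your ``tightening'' paragraph actually proves, so the argument is sound even though the stated invariant overclaims. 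Second, in the ``strictly below'' case you set $(l,j)$ to the last contour point in column $j$, which presumes the contour enters column $j$ at all; this deserves a sentence, since a priori the contour could exit on the south side at a column $h>j$ without ever touching column $j$. In that event every cell of column $j$ is on the NW side of the path and hence not strictly below, so the case is vacuous --- but that observation should be recorded (and dually for rows in the ``strictly above'' case). Your virtual-cell convention for the exit position $(|A|,h)$ or $(l,-1)$ is exactly the right device for making the lemma's $(i,j-1)$ and $(i+1,j)$ clauses well-defined at $j=0$ and $i=|A|-1$.
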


\section{A Subquadratic \ThreeSUM{} Decision Tree}\label{sect:nonuniform}

Recall that we are given a set $A\subset \mathbb{R}$ of reals and must determine
if there exist $a,b,c\in A$ summing to zero.  We first state the algorithm, then establish its correctness and efficiency.

\setdescription{leftmargin=1.5cm,labelindent=\parindent}
\begin{description}
\addtolength{\itemsep}{-0.25\baselineskip}
\item[1.$\;$] Sort $A$ in increasing order as $A(0)\ldots A(n-1)$.
Partition $A$ into $\ceil{n/g}$ groups $A_0,\ldots, A_{\ceil{n/g}-1}$  of size at most $g$, 
where $A_i \bydef \{A(ig), \ldots, A((i+1)g-1)\}$ and $A_{\ceil{n/g}-1}$ may be smaller.
The first and last elements of $A_i$ are $\min(A_i) = A(ig)$ and $\max(A_i) = A((i+1)g-1)$.

\item[2.$\;$] Sort $D \bydef \bigcup_{i\in [n/g]} \left(A_i-A_i\right) = \{a - a' \;|\; a,a' \in A_i \mbox{ for some } i\}$.

\item[3.$\;$] For all $i,j \in [n/g]$, sort $A_{i,j} \bydef A_i + A_j = \{a + b \;|\; a\in A_i \mbox{ and } b\in A_j\}$.

\item[4.$\;$] For $k$ from 1 to $n$,
\item [4.1.$\;\;$] Initialize $\lo \leftarrow 0$ and $\hi \leftarrow \floor{k/g}$ to be the group index of $A(k)$.
\item [4.2.$\;\;$] Repeat:
\item [4.2.1.$\;\;\;\;$] If $-A(k) \in A_{\lo,\hi}$, report ``solution found'' and halt.
\item [4.2.2.$\;\;\;\;$] If $\max(A_{\lo}) + \min(A_{\hi}) > -A(k)$ then decrement $\hi$, otherwise increment $\lo$.
\item [4.3.$\;\;\;$] Until $\hi < \lo$.

\item[5.$\;$] Report ``no solution'' and halt.
\end{description}

With appropriate modifications this algorithm also solves the three-set version of \ThreeSUM, where the input is $A,B,C\subset\mathbb{R}$.

\paragraph{Efficiency of the Algorithm.}
Step 1 requires $n\log n$ comparisons.
By Lemmas~\ref{lem:Fredman} and \ref{lem:Buck}, 
Step 2 requires $O(n\log n + |D|) = O(n\log n + gn)$ comparisons to sort $D$.
Using Fredman's trick, Step 3 requires no comparisons at all, given the sorted order on $D$.
(If $a,a' \in A_i$ and $b,b'\in A_j$, $a + b < a' + b'$ holds iff $a-a' < b'-b$.)
For each iteration of the outer loop (Step 4) there are at most $\ceil{n/g}$ iterations of the inner loop (Step 4.2) since each iteration
ends by either incrementing $\lo$ or decrementing $\hi$.
In Step 4.2.1 we can determine whether $-A(k)$ is in $A_{\lo,\hi}$ with a binary search, in $\log|A_{\lo,\hi}| = \log(g^2)$ comparisons.
In total the number of comparisons is on the order of $n\log n + gn + (n^2\log g)/g$, which is $O(n^{3/2}\sqrt{\log n})$ when $g = \sqrt{n\log n}$.

\paragraph{Correctness of the Algorithm.}
The purpose of the outer loop (Step 4) is to find $a,b\in A$, for which $a,b \le A(k)$ and $a + b + A(k) = 0$.
This is tantamount
to finding indices $\lo,\hi$ for which $a\in A_{\lo}, b\in A_{\hi}$, and $-A(k) \in A_{\lo,\hi}$.
We maintain the loop invariant that if there exist $a,b$ for which $a+b +A(k)=0$, then both of $a$ and $b$
lie in $A_{\lo}, A_{\lo+1}, \ldots, A_{\hi}$.  Suppose the algorithm has not halted in Step 4.2.1, that is, there are no solutions
with $a\in A_{\lo}, b\in A_{\hi}$.  If $\max(A_{\lo}) + \min(A_{\hi}) > -A(k)$ then there can clearly be no solutions with 
$b \in A_{\hi}$ since $b\geq \min(A_{\hi})$, so decrementing $\hi$ preserves the invariant.
Similarly, if $\max(A_{\lo}) + \min(A_{\hi}) < -A(k)$ then there can be no solutions with $a\in A_{\lo}$ since $a\leq \max(A_{\lo})$, 
so incrementing $\lo$ preserves the invariant.
If it is ever the case that $\hi < \lo$ then, by the invariant, no solutions exist.

\paragraph{Algorithmic Implementation.}
This \ThreeSUM{} algorithm can be implemented to run in $O(n^2\log n)$ time while performing only 
$O(n^{3/2}\log n)$ comparisons.
Using any optimal sorting algorithm
Steps 1--3 can be executed in $O(gn\log(gn) + (n/g)^2\cdot g^2 \log g) = O(n^2\log n)$ time while using $O(gn\log(gn))$ comparisons.
Now the boxes $\{A_{i,j}\}$ have been explicitly sorted, so the binary searches in Step 4.2.1 can be executed in $O(\log g)$ time per search.
The total running time is $O(n^2\log n)$ and the number of comparisons is 
now minimized when $g=\sqrt{n}$, for a total of $O(n^{3/2}\log n)$ comparisons.
We do not know of any polynomial time \ThreeSUM{} algorithm that performs $O(n^{3/2}\sqrt{\log n})$ comparisons.

\section{Some Subquadratic \ThreeSUM{} Algorithms}\label{sect:uniform}

In our \ThreeSUM{} decision tree, 
sorting $D$ (Step 2) is a comparison-efficient way to accomplish Step 3, but it only lets us {\em deduce}
the sorted order of the boxes $\{A_{i,j}\}$.
It does not give us a useful representation of these sorted orders, namely 
one that lets us implement each comparison of the binary search in Step 4.2.1 in $O(1)$ time.
In this section we present several methods for sorting the boxes 
based on bichromatic dominating pairs,
as in Chan~\cite{Chan08}
and Bremner et al.~\cite{BremnerCDEHILPT14}; see Lemma~\ref{lem:redblue}.
The total time spent performing binary searches in Step 4.2.1 will be $O(n^2\log g/g)$, so our goal is to make
$g$ as large as possible, provided that the cost of sorting the boxes is of a lesser order.

\paragraph{Overview.}
As a warmup we  
give, in Section~\ref{sect:simple-subquad-alg}, a relatively simple subquadratic \ThreeSUM{} algorithm running in 
$O\mathopen{}\left(n^2(\log\log n)^{3/2}/(\log n)^{1/2}\right)\mathclose{}$ time.  
In Section~\ref{sssect:faster3sum} we present a more sophisticated algorithm, some of whose parameters can be selected
either deterministically or randomly.
Sections~\ref{sect:rand-param} and \ref{sect:det-param} give two parameterizations of the algorithm, which
lead to an $O\mathopen{}\left(n^2 (\log\log n/\log n)^{2/3}\right)\mathclose{}$ time deterministic \ThreeSUM{} algorithm
and $O\mathopen{}\left(n^2 (\log\log n)^2/ \log n\right)\mathclose{}$-time randomized \ThreeSUM{} algorithm.

\subsection{A Simple Subquadratic \ThreeSUM{} Algorithm}\label{sect:simple-subquad-alg}

Choose the group size to be $g = \Theta(\sqrt{\log n/\log\log n})$.  The algorithm enumerates {\em every}
permutation $\pi : [g^2]\rightarrow [g]^2$, where $\pi = (\pi_{\row},\pi_{\col})$ is decomposed into
row and column functions $\pi_{\row}, \pi_{\col} : [g^2] \rightarrow [g]$.
By definition $\pi$ is the correct sorting permutation iff
$A_{i,j}(\pi(t)) < A_{i,j}(\pi(t+1))$ for all $t\in [g^2-1]$.\footnote{Without loss of generality we can assume $A_{i,j}$ is totally ordered.  See Lemma~\ref{lem:tiebreaking}.}
Since $A_{i,j} = A_i+A_j$ this inequality can 
also be written $A_i(\pi_{\row}(t)) + A_j(\pi_{\col}(t)) < A_i(\pi_{\row}(t+1)) + A_j(\pi_{\col}(t+1))$.
By Fredman's trick this is equivalent to saying that the (red) point $p_j$ dominates the (blue) point $q_i$, where
\begin{align*}
p_{j} &= \left(A_j(\pi_{\col}(1)) - A_j(\pi_{\col}(0)), \,\ldots\,, A_j(\pi_{\col}(g^2-1)) - A_j(\pi_{\col}(g^2-2))\right)\\
q_{i} &= \left(A_i(\pi_{\row}(0)) - A_i(\pi_{\row}(1)), \,\ldots\,, A_i(\pi_{\row}(g^2-2)) - A_i(\pi_{\row}(g^2-1))\right).
\end{align*}

We find all such dominating pairs.
By Lemma~\ref{lem:redblue} the time to report red/blue dominating pairs, over all $(g^2)!$ invocations of the procedure, 
is $O\mathopen{}\left((g^2)! c_\epsilon^{g^2-1} (2n/g)^{1+\epsilon} + (n/g)^2\right)\mathclose{}$, the last term being the total size of the outputs.
For $\epsilon=1/2$ and $g = \frac{1}{2}\sqrt{\log n/\log\log n}$
the first term is negligible.  The total running time is therefore 
$O((n/g)^2)$ for dominance reporting
and $O(n^2\log g/g) = O\mathopen{}\left(n^2 (\log\log n)^{3/2}/(\log n)^{1/2}\right)\mathclose{}$
for the binary searches in Step 4.2.1.

Since there are at most $g^{8g}$ realizable permutations of $A_{i,j}$, not $(g^2)!$ (see Lemma~\ref{lem:sortX+Y} and Fredman~\cite{Fredman76}),
we could possibly shave off another $\sqrt{\log\log n}$ factor by setting $g = \Theta(\sqrt{\log n})$.  
However, with a bit more work it is possible to save $\poly(\log n)$ factors, as we now show.

\subsection{A Faster \ThreeSUM{} Algorithm}\label{sssect:faster3sum}

To improve the running time of the simple algorithm we must sort larger boxes.
Our approach is to partition the blocks into layers and sort each layer separately.
So long as each layer has size $\Theta(\log n)$, the cost of red/blue dominance reporting will
be negligible.  The main difficulty is that the natural boundaries between layers are unknown and different for each of the blocks in $\{A_{i,j}\}$.

Let $P\subset [g]^2$ be a set of $p$ positions in the $g\times g$ grid that includes positions $(0,0)$ and $(g-1,g-1)$.
How we select the remaining $p-2$ positions in $P$ will be addressed later.
For each $(l,m)\in P$, consider $\contour(A_{i,j}(l,m),A_{i,j})$, that is,
the path in $[g]^2$ taken by the standard \ThreeSUM{} algorithm of Section~\ref{sect:quadratic3SUM}
when searching for $A_{i,j}(l,m)$ inside $A_{i,j}$.
Clearly $\contour(A_{i,j}(l,m))$ goes through position $(l,m)$.  
For any two $(l,m),(l',m')\in P$, $\contour(A_{i,j}(l,m))$ and $\contour(A_{i,j}(l',m'))$ may intersect in several places (see Figure~\ref{fig:searchpath})
though they never cross.  According to Lemma~\ref{lem:contour}, the two contours define a {\em tripartition} $(R,S,T)$ 
of the positions of $[g]^2$ into three regions,
 where
\begin{align*}
A_{i,j}(R) &\subset (-\infty,\, A_{i,j}(l,m)]\\
A_{i,j}(S) &\subset (A_{i,j}(l,m),\, A_{i,j}(l',m'))\\
A_{i,j}(T) &\subset [A_{i,j}(l',m'),\, \infty)
\end{align*}
Here $A_{i,j}(X) = \{A_{i,j}(x) \:|\: x\in X\}$ for a subset $X\subseteq [g]^2$.
Note that $(R,S,T)$ is fully determined by the shapes of the contours, not the specific contents of 
$A_{i,j}$.\footnote{We continue to assume that ties are broken to make $A_{i,j}$ totally ordered.  Refer to Lemma~\ref{lem:tiebreaking}.}
See Figure~\ref{fig:tripartition}(a) for an illustration.

\newcommand{\cellred}{\cellcolor[rgb]{1,0,0}}
\newcommand{\cellblue}{\cellcolor[rgb]{0,.5,1}}
\newcommand{\cellgreen}{\cellcolor[rgb]{0,1,0}}
\newcommand{\cellyellow}{\cellcolor[rgb]{1,1,0}}
\begin{figure}[tb]
\centering
\begin{tabular}{|c|c|c|c|c|c|c|c|c|c|}
\hline
 250 &  272 &  362 &  368 &  372 &  385 & \cellblue 416 & \cellblue 546 & \cellgreen 549 & \cellgreen 606 \\
\hline
 289 &  311 &  401 &  407 &  411 & \cellblue 424 & \cellgreen 455 & \cellyellow 585 & \cellyellow 588 &  645 \\
\hline
 299 &  321 &  411 &  417 &  421 & \cellblue 434 & \cellyellow 465 &  595 &  598 &  655 \\
\hline
 311 &  333 &  423 &  429 &  \cellblue 433 & \cellblue \emph{446} & \cellyellow 477 &  607 &  610 &  667 \\
\hline
 325 &  347 &  437 & \cellblue 443 & \cellblue 447 &  460 & \cellyellow 491 &  621 &  624 &  681 \\
\hline
 331 &  353 & \cellblue 443 & \cellblue 449 &  453 &  466 & \cellyellow 497 &  627 &  630 &  687 \\
\hline
 363 & \cellblue 385 & \cellblue 475 &  481 &  485 &  498 & \cellyellow 529 &  659 &  662 &  719 \\
\hline
 384 & \cellblue 406 &  496 &  502 &  506 &  519 & \cellyellow 550 &  680 &  683 &  740 \\
\hline
 412 & \cellblue 434 &  524 &  530 &  534 &  \cellyellow 547 & \cellyellow \emph{578} &  708 &  711 &  768 \\
\hline
 415 & \cellblue 437 &  527 &  533 &  537 & \cellyellow 550 &  581 &  711 &  714 &  771 \\
\hline
\end{tabular}
\caption{\label{fig:searchpath}A subset of a block defined by two search paths. The red and
  green are the two search paths for elements at position (3,5) and
  (8,6) respectively (yellow are shared points). The subset defined
  is the elements on the paths and the elements between them.}
\end{figure}

A contour is defined by at most $2g-1$ comparisons between the
search element and elements of the block.  Suppose
that $\tau = (\tau_{\row},\tau_{\col})$ is purported to be  
$\contour(A_{i,j}(l,m))$, that is, $\tau(0) = (\tau_{\row}(0),\tau_{\col}(0)) = (0,g-1)$ is the starting position of $(\lo,\hi)$
and $\tau(t+1) \in \tau(t) + \{(1,0), (0,-1)\}$ depending on whether $\lo$ is incremented or $\hi$ is decremented after 
the $(t+1)$th comparison.  The contour ends at the first $t^\star$ for which 
$\tau(t^\star) = (g,\cdot)$ or $(\cdot, -1)$ depending on whether the search for $A_{i,j}(l,m)$ falls off the southern or western boundary of $A_{i,j}$.
Clearly $\tau$ is the correct contour if and only if
\begin{align*}
A_{i,j}(l,m) &< A_{i,j}(\tau(t))		& \mbox{ when $\tau(t+1) = \tau(t) + (0,-1)$}\\
A_{i,j}(l,m) &> A_{i,j}(\tau(t))		& \mbox{ when $\tau(t+1) = \tau(t) + (1,0)$}
\end{align*}
for every $t\in [t^\star]$, excluding the $t$ for which $\tau(t) = (l,m)$ since in this case we have equality: $A_{i,j}(l,m) = A_{i,j}(l,m)$.
Restating this, $\tau$ is the correct contour if the (red) point $p_j$ dominates the (blue) point $q_i$, defined as
\begin{align*}
p_j &= \left(\ldots, \;\sigma(t)\left(A_j(\tau_{\col}(t)) - A_j(m)\right), \;\ldots \right)\\
q_i &= \left(\ldots, \;\sigma(t)\left(A_i(l) - A_i(\tau_{\row}(t))\right),  \;\ldots \right),
\end{align*}
where $\sigma(t) \in \{1,-1\}$ is the proper sign:
\[
\sigma(t) = \left\{
\begin{array}{ll}
1 & \mbox{ when $\tau(t+1)=\tau(t) + (0,-1)$, and}\\
-1 & \mbox{ when $\tau(t+1)=\tau(t) + (1,0)$.}
\end{array}
\right.
\]
The coordinate $t$ for which $\tau(t)=(l,m)$ is, of course, omitted from $p_j$ and $q_i$, so 
both vectors have length at most $2g-2$.

Call a pair $(\tau,\tau')$ of contours {\em legal}
if
\begin{enumerate}
\item[(i)] Whenever $\tau$ and $\tau'$ do not intersect, $\tau$ is above $\tau'$.
\item[(ii)] There are two $(l,m),(l',m')\in P$ such that $\tau$ contains  $(l,m)$ and $\tau'$ contains $(l',m')$.
\item[(iii)] Let $(R,S,T)$ be the tripartition of $[g]^2$ defined by $(\tau,\tau')$, 
where $S$ are those positions lying strictly between $(l,m)$ and $(l',m')$. 
Then $P\cap S = \emptyset$ and $|S|\le s$, where $s$ is a parameter to be determined.
\end{enumerate}
Let us clarify criterion (iii).  It states that if $A_{i,j}$ is any {\em specific} box for which
$(\tau,\tau')$ are correct contours of $A_{i,j}(l,m)$ and $A_{i,j}(l',m')$, the number of positions $(l'',m'')\in [g]^2$
for which $A_{i,j}(l'',m'') \in (A_{i,j}(l,m), A_{i,j}(l',m'))$ is at most $s$, and no such position appears in $P$.

Our algorithm enumerates every legal pair $(\tau,\tau')$ of contours, at most $2^{4g}$ in total.
Let $(l,m),(l',m')\in P$ be the points lying on $\tau,\tau'$ and $(R,S,T)$ be the tripartition of $[g]^2$ defined by $(\tau,\tau')$.
For each $(\tau,\tau')$ the algorithm enumerates every realizable permutation $\pi : [|S|] \rightarrow S$ of
the elements at positions in $S$.  
By Lemma~\ref{lem:sortX+Y} there are $O\mathopen{}\left({s\choose 2}^{2g}\right)\mathclose{} < 2^{4g \log s}$ such permutations, 
which can be enumerated in $O(2^{4g\log s})$ time.
For each $(\tau,\tau',\pi)$ we create red points $\{p_j\}_{j\in [n/g]}$ and blue points $\{q_i\}_{i\in [n/g]}$ in $\mathbb{R}^{4g+s-5}$ 
such that $p_j$ dominates $q_i$ iff 
$\tau=\contour(A_{i,j}(l,m))$ and $\tau'=\contour(A_{i,j}(l',m'))$ are the correct contours (w.r.t.~$A_{i,j}$)
and $\pi$ is the correct sorting permutation of $A_{i,j}(S)$.
The first $4g-4$ coordinates encode the correctness of $\tau$ and $\tau'$ and the last $s-1$ coordinates
encode the correctness of $\pi$.

According to Lemma~\ref{lem:redblue}, the time to report all dominating pairs 
is $O(p(n/g)^2 + 2^{4g}\cdot 2^{4g\log s}\cdot (c_{\epsilon})^{4g+s-5}(2n/g)^{1+\epsilon})$.
The first term is the output size, since by
criterion (iii) of the definition of {\em legal}, at most $p-1$ pairs are reported for each of the $(\ceil{n/g})^2$ boxes.
There are $2^{4g}2^{4g\log s}$ choices for $(\tau,\tau',\pi)$ and the dimension of the point set is at most $4g+s-5$, 
but could be smaller if the contours happen to be short or $|S| < s$.  
Fixing $\epsilon=1/2$, if $g\log s$ and $g+s$ are both $O(\log n)$ (with a sufficiently small leading constant)
the running time of the algorithm will be dominated by the time spent reporting the output.

Call a box $A_{i,j}$ {\em bad} if the output of the dominating pairs algorithm fails to determine its sorted order.
The only way a box can be bad is if an otherwise legal $(\tau,\tau')$ with tripartition $(R,S,T)$ were correct for
$A_{i,j}$  but failed to be legal because $|S| > s$, leaving the sorted order of $A_{i,j}(S)$ unknown.

If all boxes are not bad we can search for $x\in \mathbb{R}$ in $A_{i,j}$ in $O(\log g)$ time using binary search, as follows.
Each box $A_{i,j}$ was associated with a list of $p-1$ triples of the form $(\tau,\tau',\pi)$ returned by the dominating pairs algorithm, 
one for each pair of successive elements in $A_{i,j}(P)$.
The first step is to find the predecessor of $x$ in $A_{i,j}(P)$, that is, to find the consecutive $(l,m),(l',m')\in P$ 
for which $A_{i,j}(l,m) \le x < A_{i,j}(l',m')$.  Let $(\tau,\tau',\pi)$ be the triple associated with $(l,m),(l',m')$
and $(R,S,T)$ be the tripartition of $(\tau,\tau')$.
Each legal, realizable $(\tau,\tau',\pi)$ is encoded as a bit string with length $4g(1+\log s)$, which must
fit comfortably in one machine word.  Before executing the algorithm proper we build, in $o(n)$ time, a lookup table
indexed by tuples $(\tau,\tau',\pi,r)$ that contains the location in $[g]^2$ of the element with rank $r$ in $S$,
sorted according to $\pi$.  Using this lookup table it is 
straightforward to perform a binary search for $x$ in $A_{i,j}(S)$, in $O(\log|S|) = O(\log g)$ time.

\begin{figure}
\centering
\begin{tabular}{c@{\hcm[1]}c}
\scalebox{.4}{\includegraphics{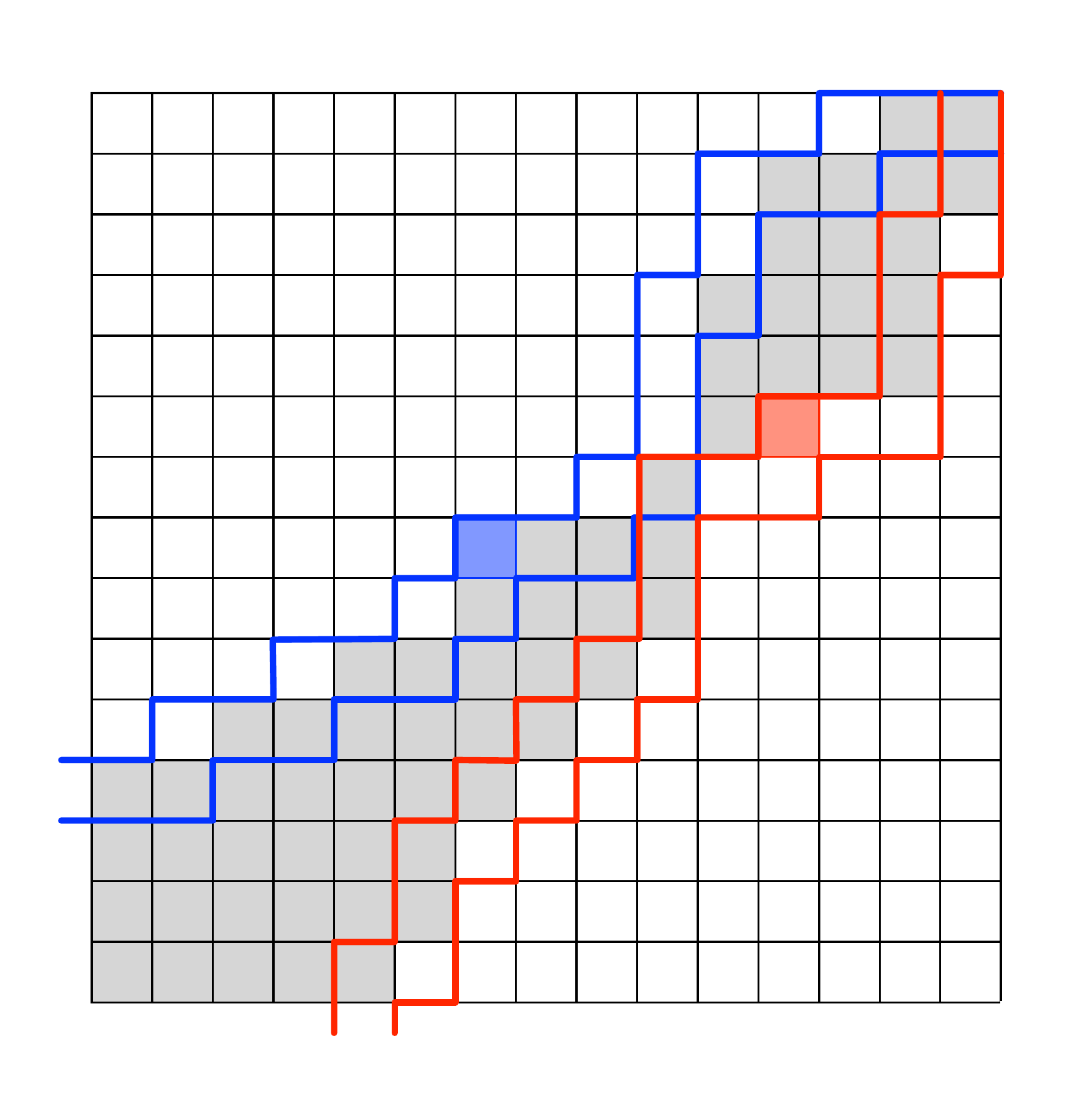}}
&
\scalebox{.4}{\includegraphics{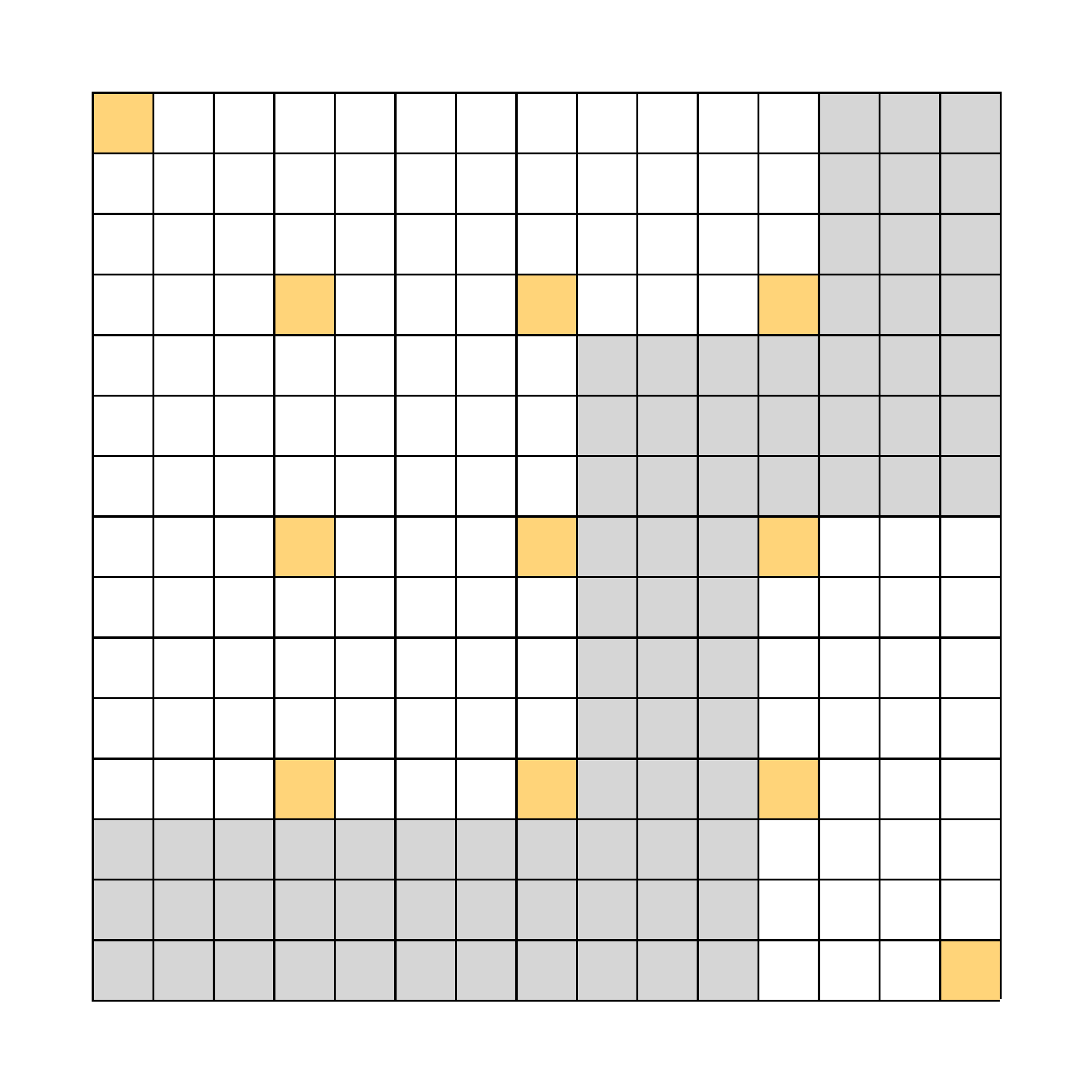}}\\
(a) & (b)
\end{tabular}
\caption{\label{fig:tripartition}Illustrations of tripartitions defined by two contours in a $[15]\times [15]$ grid.
Left: the blue and red locations are in $P$.  Two possible contours are indicated by blue and red paths.
They define a tripartition $(R,S,T)$ with $S$ marked in gray.
Right: $P$ is chosen to include two corner locations and an evenly spaced $q\times q$ grid.
Any tripartition $(R,S,T)$ defined by a legal pair of contours has $P\cap S = \emptyset$, implying
that $|S| \le 2g^2/(q+1)$.  An example of an $S$ nearly achieving that size is marked in gray.}
\end{figure}

\subsection{A Randomized Parameterization of the Algorithm}\label{sect:rand-param}

Throughout let $\delta>0$ be a sufficiently small constant.
In the randomized implementation of our algorithm we choose
$g = s = \delta\ln n/\ln\ln n$ and $p = 2+3\delta\ln n$.  The points $(0,0)$ and $(g-1,g-1)$ must be in $P$
and the remaining $3\delta\ln n$ points are chosen uniformly at random.  With these
parameters the probability of a box being {\em bad} is sufficiently low to keep the expected
cost per search $O(\log g)$.

\begin{lemma}\label{lem:badprob}
The probability a particular box is bad is at most $1/g$.
\end{lemma}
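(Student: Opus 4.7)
The plan is to reformulate the event ``box $A_{i,j}$ is bad'' as a purely combinatorial condition on the random set $P$, and then close out with a one-line union bound.

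First I would characterize badness in terms of sorted ranks. By Lemma~\ref{lem:tiebreaking} we may assume $A_{i,j}$ is totally ordered, so its $g^2$ entries receive distinct ranks $1,\dots,g^2$. Because $(0,0)$ and $(g-1,g-1)$ lie in $P$ and host the minimum and maximum of $A_{i,j}$ respectively, the ranks occupied by $A_{i,j}(P)$ always include $1$ and $g^2$. Now consider any two positions $(l,m),(l',m')\in P$ that are consecutive in the sorted order of $A_{i,j}(P)$. The paths $\contour(A_{i,j}(l,m))$ and $\contour(A_{i,j}(l',m'))$ are correct contours, and by Lemma~\ref{lem:contour} their tripartition $(R,S,T)$ satisfies $P\cap S=\emptyset$, with $S$ equal to the set of positions in $[g]^2$ whose rank lies strictly between the ranks of $(l,m)$ and $(l',m')$. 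Hence $A_{i,j}$ is bad iff for some such consecutive pair the rank gap exceeds $s+1$, equivalently iff some window of $s+1$ consecutive ranks in the sorted order of $A_{i,j}$ contains no element of $A_{i,j}(P)$.

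Next I would apply a direct union bound. Conditioning on $A_{i,j}$ fixes the bijection between positions in $[g]^2$ and ranks $\{1,\dots,g^2\}$. The $p-2=3\delta\ln n$ non-corner elements of $P$ are independent and uniform on $[g]^2$, so each lands in any fixed window $W$ of $s+1$ consecutive ranks with probability exactly $(s+1)/g^2$. The probability that no such sample lands in $W$ is therefore at most $\exp\bigl(-(s+1)(p-2)/g^2\bigr)$, and summing over the (at most $g^2$) possible windows gives
\[
\Pr[A_{i,j}\text{ bad}] \;\le\; g^2 \exp\!\bigl(-(s+1)(p-2)/g^2\bigr).
\]
Plugging in $g=s=\delta\ln n/\ln\ln n$ and $p-2 = 3\delta\ln n$ makes the exponent at least $3\delta\ln n/g = 3\ln\ln n$, so the right-hand side is at most $g^2/(\ln n)^3 = O\bigl(1/(\ln n\,(\ln\ln n)^2)\bigr)$, comfortably below $1/g = (\ln\ln n)/(\delta\ln n)$ for large $n$.

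I do not expect a real obstacle: the combinatorial characterization of badness is the only step requiring thought, and it follows directly from Lemma~\ref{lem:contour} and clause (iii) of the definition of ``legal''. Minor points I would verify on the side are that it is immaterial to the bound whether the $p-2$ random points are drawn with or without replacement and whether they are drawn from $[g]^2$ or from $[g]^2\setminus\{(0,0),(g-1,g-1)\}$, and that including windows touching rank $1$ or $g^2$ loosens the union bound by only a constant factor.
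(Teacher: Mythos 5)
Your proof is correct and follows essentially the same approach as the paper's: characterize badness as the existence of a window of $s+1$ consecutive ranks in $A_{i,j}$ disjoint from $P$, bound the miss probability of any fixed window by $\exp(-(s+1)(p-2)/g^2)\le e^{-3\ln\ln n}<1/g^3$, and union-bound over the at-most-$g^2$ windows. The paper states the combinatorial reformulation of badness in one sentence without deriving it from Lemma~\ref{lem:contour} and clause (iii) of legality, and writes the single-window bound as $(1-(p-2)/g^2)^{s+1}$ rather than $(1-(s+1)/g^2)^{p-2}$, but these are cosmetic differences; the computation and conclusion are the same.
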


\begin{proof}
Let $\pi$ be the sorted order for some box $A_{i,j}$.
The probability $A_{i,j}$ is bad is precisely the probability that there are
$s+1$ consecutive elements (according to $\pi$) that are not included in $P$.
The probability that this occurs for a particular set of $s+1$ elements is
less than $(1-(p-2)/g^2)^{s+1} < e^{-(p-2)(s+1)/g^2} < e^{-3\ln\ln n} < 1/g^3$.
By a union bound over all $g^2-(s+1)$ sets of $s+1$ consecutive elements,
the probability $A_{i,j}$ is bad is at most $1/g$.
\end{proof}

The expected time per search is therefore $O(\log g) + 1/g\cdot O(g) = O(\log g)$. 
By linearity of expectation the expected total running time is 
$O(p(n/g)^2 + n^2 (\log g)/g) = O(n^2 (\log\log n)^2/\log n)$.

\begin{remark}
We could have set the parameters differently and achieved the same running time.
For example, setting $g = p = \Theta(\log n/\log\log n)$ and $s=\Theta(\log n)$ would also work.
The advantage of keeping $s=O(\log n/\log\log n)$ is simplicity:
we can afford to enumerate all $s!$ permutations of $S\subset [g]^2$ rather than explicitly 
construct a hyperplane arrangement in order to enumerate
only those {\em realizable} permutations of $S$.
\end{remark}

\paragraph{High Probability Bounds.}
The running time of the algorithm may deviate from its expectation with non-negligible probability
since the {\em badness} events for the boxes $\{A_{i,j}\}$ can be strongly positively correlated.
The easiest way to obtain high probability bounds is simply to choose $L=c\log n$
random point sets $\{P_l\}_{l\in [L]}$, estimate the cost of the algorithm under each 
point set, then execute the algorithm under the point set with the best estimated cost.
The first step is to run a truncated version of the algorithm in order to determine which queries will be asked in Step 4.2.1.
Rather than answer the query $-A(k) \in A_{i,j}$ we simply record the triple $(k,i,j)$ in a list $\mathcal{Q}$ to be answered later.
The running time of the algorithm under $P_l$ 
is $O(n^2(\log g)/g)$ plus $g$ times the number of bad triples in $\mathcal{Q}$, that is, 
those $(k,i,j)$ for which $A_{i,j}$ is bad according to $P_l$.

Let $\epsilon_l$ be the true fraction of bad triples in $\mathcal{Q}$ according to $P_l$ and
$\hat\epsilon_l$ be the estimate of $\epsilon_l$ obtained by the following 
procedure.  Sample $M = cg^2\ln n$ elements of $\mathcal{Q}$ uniformly at random
and for each, test whether the given block is bad according to $P_l$ by sorting its elements, in $O(g^2\log g)$ time.
If $X$ is the number of blocks discovered to be bad, report the estimate $\hat\epsilon_l = X/M$.
By a standard version of the Chernoff bound\footnote{If $X$ is the number of successes in $n$ independent Bernoulli
trials, $\Pr(X>\E(X) + t)$ and $\Pr(X<\E(X)+t)$ are both upper bounded by $e^{-2t^2/n}$.  
See~\cite[Thm.~1.1]{DubhashiPanconesi09}.}   we have
\[
\Pr(|\hat\epsilon_l - \epsilon_l| > 1/g)
= \Pr(|X - \E(X)| > M/g) < 2e^{-2(M/g)^2/M} = 2n^{-2c}.
\]
By Lemma~\ref{lem:badprob}, $\E(\epsilon_l) = 1/g$ for each $l$,
so by Markov's inequality $\Pr(\min_{l\in L}\{\epsilon_l\} \le 2/g) \ge 1 - 2^{-L} = 1-n^{-c}$.
With high probability, each $\hat\epsilon_l$ deviates from $\epsilon_l$ by at most $1/g$,
so the running time of the algorithm will be within a constant factor of its expectation with 
probability $1-O(n^{-c})$.  The time to pick the best point set $P_{l^\star}$, $l^\star$ being $\argmin_{l\in[L]} \{\hat\epsilon_l\}$,
is $O(LMg^2\log g) = o(\log^6 n)$.
We could set $L$ and $M$ as high as $n/\polylog(n)$, making the probability that 
the algorithm deviates from its expectation exponentially small, $\exp(-n/\polylog(n))$.

\subsection{A Deterministic Parameterization of the Algorithm}\label{sect:det-param}

We achieve a subquadratic worst-case \ThreeSUM{} algorithm by choosing $g,s,p,$ and $P$ such that no block can be bad.
Fix $g = (\delta\log n)^{2/3}(\log\log n)^{1/3}$ and $p = 2+ q^2 < (\delta\log n)^{2/3}(\log\log n)^{4/3}$ for an integer $q$ to be determined.
Aside from the two obligatory points, $P$ contains an evenly spaced $q\times q$ grid in $[g]\times [g]$.
Setting $\Delta = \ceil{\frac{g+1}{q+1}}$, $P$ is defined as
\[
P = \{(0,0),(g-1,g-1)\} \cup \{(k\Delta-1, l\Delta-1) \;|\; \mbox{ where $1\le k,l\le q$}\}.
\]
See Figure~\ref{fig:tripartition}(b).

We now argue that no box can be bad if $s = 2g(\Delta-1)$.
For any legal pair of contours $(\tau,\tau')$, in the corresponding tripartition $(R,S,T)$
no element of $P$ can be contained in $S$, that is, in any row (or column) containing elements of $P$, the width (or height)
of the band $S$ is at most $\Delta-1$.  Since both $\tau$ and $\tau'$ are monotone paths in $[g]^2$
(non-decreasing by row, non-increasing by column),
we always have $|S| < 2g(\Delta-1) < 2g^2/(q+1) < 2\delta\log n$.   See Figure~\ref{fig:tripartition}(b) for a worst-case example.
For $\delta$ sufficiently small the overhead for reporting dominating pairs will be negligible.
The overall running time is therefore 
$O(n^2(\log g)/g) = O(n^2 / (\log n/\log\log n)^{2/3})$.

\section{Linear Degeneracy Testing}\label{sect:lineardegeneracy}

Recall that we are given a set $S\subset \mathbb{R}$ and a function $\phi(x_1,\ldots,x_k) = \alpha_0 + \sum_{i=1}^k \alpha_i x_i$, for some real coefficients
$\{\alpha_i\}$.  The problem is to determine if there is a point $(x_1,\ldots,x_k) \in S^k$ where $\phi$ is zero.
As we show below, our \ThreeSUM{} decision tree can be generalized in a straightforward way to solve
$k$-\LDT{} with $O(n^{k/2}\sqrt{\log n})$ comparisons, when $k\ge 3$ is odd.
Unfortunately, we do not see how to generalize our \ThreeSUM{} {\em algorithms} to solve
$k$-\LDT{} in $O(n^{(k+1)/2}/\polylog(n))$ time, for any odd $k\ge 5$.

\begin{proof} (of Theorem~\ref{thm:kLDT})
Define $\alpha \cdot S$ to be the set $\{\alpha\cdot a \;|\; a\in S\}$, where $\alpha\in\mathbb{R}$.
Begin by sorting the sets
\begin{align*}
A &= \{\alpha_0 + a_1 + a_2 + \cdots + a_{(k-1)/2} \;|\; \mbox{$a_i \in \alpha_i \cdot S$, for each $i>0$}\}
\intertext{and}
B &= \{a_{(k+1)/2} + \cdots + a_{k-1} \;|\; \mbox{$a_i \in \alpha_i \cdot S$, for each $i$}\}
\end{align*}

We have effectively reduced $k$-\LDT{} to an unbalanced \ThreeSUM{} problem.  Letting
$C$ be the set $\alpha_k \cdot S$, the problem is to determine if there exist $a\in A, b\in B, c\in C$ such that $a+b+c=0$.
Note that $|A|=|B|=n^{(k-1)/2}$ whereas $|C|=n$.  The standard \ThreeSUM{} algorithm of Section~\ref{sect:quadratic3SUM}
performs $|C| \cdot (|A| + |B|) = n^{(k+1)/2}$ comparisons.
Generalizing the decision tree of Section~\ref{sect:nonuniform} (from one list to three)
we can solve unbalanced \ThreeSUM{} using $O(g(|A| + |B|) + g^{-1} |C| (|A|+|B|)\log g)$ comparisons, which is $O(n^{k/2}\sqrt{\log n})$ when $g=\sqrt{n\log n}$.
\end{proof}

Our subquadratic \ThreeSUM{} algorithms do not extend naturally to unbalanced instances.  When $g=\polylog(n)$
we can no longer afford to {\em explicitly} sort all $g\times g$ boxes in $A+B$ as this would require at least 
$\Omega(|A|\cdot |B|/g^2) = \Omega(n^{k-1}/\polylog(n))$ time.\footnote{Note that there are only $O(|C|(|A|+|B|)/g)$ boxes of interest.  The dominating pairs approach does not let us sort an arbitrary selection of boxes in constant time per box,
but it is possible to accomplish this task in a more powerful model of computation.  On a souped-up word RAM with $O(g^2\log n)$-bit words
and a couple non-standard unit-time operations, any $g\times g$ box can be sorted in $O(1)$ time.
Simulating such a unit-time operation on the traditional word RAM is a challenging problem.}

\section{Zero Triangles}\label{sect:ZeroTriangle}

We consider a matrix product called {\em target-min-plus} that subsumes the
$(\min,+)$-product (aka distance product) and the \ZeroTriangle{} problem of~\cite{WilliamsW13}.
Given real matrices $A\in (\mathbb{R}\cup\{\infty\})^{r \times s}, B\in (\mathbb{R}\cup\{\infty\})^{s \times t},$ 
and a target matrix $T\in (\mathbb{R}\cup\{-\infty,\infty\})^{r\times t}$,
the goal is to compute $C=\targetplus(A,B,T)$, where
\[
C(i,j) = \min\left\{\, A(i,k) + B(k,j) \:|\: k\in[s] \mbox{ and } A(i,k)+B(k,j) \ge T(i,j)\,\right\}
\]
as well as the matrix of witnesses, that is, the $k$ (if any) for which $C(i,j) = A(i,k) + B(k,j)$.  This operation reverts
to the $(\min,+)$-product when $T(i,j) = -\infty$.
It can also solve \ZeroTriangle{} on a weighted graph $G = (V, E, w)$ by setting $A,B,$ and $T$ as follows.
Let $A(i,j)=B(i,j) = w(i,j)$, where $w(i,j)\bydef \infty$ if $(i,j)\not\in E$,
and let $T(i,j) = -w(i,j)$ if $(i,j)\in E$ and $\infty$ otherwise.
If $C(i,j) = T(i,j)$ then there is a zero weight triangle containing $(i,j)$ and the witness matrix gives the third corner of the triangle.

The trivial target-min-plus algorithm runs in $O(rst)$ time and performs the same number of comparisons.
We can compute the target-min-plus product using fewer comparisons using Fredman's trick.

\begin{theorem}\label{thm:target-min-plus}
The decision-tree complexity of the target-min-plus product of three $n\times n$ matrices is
$O(n^{5/2}\sqrt{\log n})$.  This product can be computed in $O(n^3 (\log\log n)^2/\log n)$ time.
\end{theorem}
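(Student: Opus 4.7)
The plan is to adapt the \ThreeSUM{} decision tree of Section~\ref{sect:nonuniform} and the dominance-reporting algorithm of Section~\ref{sect:uniform} to the ``middle-index'' structure of target-min-plus. Partition the summation index $[n]$ into $n/g$ groups $K_0,\ldots,K_{n/g-1}$ of size $g$, with $g$ chosen later. For each group $K_l$ and each pair $(i,j)\in[n]^2$, define the $g$-element box $M^l_{i,j} = \bigparen{A(i,k)+B(k,j)}_{k\in K_l}$. The target entry $C(i,j)$ is the minimum over $l$ of the smallest element of $M^l_{i,j}$ that is at least $T(i,j)$. Given the sorted order of $M^l_{i,j}$, this smallest-element-$\ge T(i,j)$ can be found by binary search in $O(\log g)$ comparisons; over all $(i,j)$ and all groups the total search cost is $O(n^3\log g/g)$.

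The key point is that every box can be sorted comparison-efficiently via Fredman's trick: $A(i,k)+B(k,j) < A(i,k')+B(k',j)$ iff $A(i,k)-A(i,k') < B(k',j)-B(k,j)$. Hence the sorted order of the multiset
\[
D_l \;=\; \{A(i,k)-A(i,k') : i\in[n],\ k,k'\in K_l\} \;\cup\; \{B(k',j)-B(k,j) : j\in[n],\ k,k'\in K_l\}
\]
determines the sorted order of every $M^l_{i,j}$ without any further comparisons. Here $|D_l| = O(ng^2)$, and by Lemma~\ref{lem:Buck} the number of realizable orderings of $D_l$ is bounded by the number of cells in an arrangement of $O(n^2g^4)$ hyperplanes in the $2ng$-dimensional space of free variables (the entries of $A$ and $B$ indexed within group $l$), which is $2^{O(ng\log n)}$. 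By Lemma~\ref{lem:Fredman}, $D_l$ can be sorted in $O(ng^2+ng\log n)$ comparisons, so summed over the $n/g$ groups the sorting cost is $O(n^2 g+n^2\log n)$. Balancing this with the binary-search cost by setting $g=\sqrt{n\log n}$ gives a total of $O(n^{5/2}\sqrt{\log n})$ comparisons, establishing the decision-tree bound.

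To convert the above into an $O(n^3(\log\log n)^2/\log n)$-time algorithm, we cannot afford to deduce every box's sorted order from the global sort of $D_l$; we need an \emph{explicit} sorted representation of each $M^l_{i,j}$ that supports $O(1)$-time rank lookups so that each binary search runs in $O(\log g)$ time. We use the dominance-reporting framework of Section~\ref{sect:uniform}, adapted to $g$-element boxes: for each group $K_l$ and each candidate permutation $\pi\in S_g$, form red points $p_j=\bigparen{B(\pi(t+1),j)-B(\pi(t),j)}_{t=0}^{g-2}$ and blue points $q_i=\bigparen{A(i,\pi(t))-A(i,\pi(t+1))}_{t=0}^{g-2}$. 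By Fredman's trick, $p_j$ dominates $q_i$ iff $\pi$ is the correct sorted order of $M^l_{i,j}$. Invoking Lemma~\ref{lem:redblue} in dimension $g-1$ for each of the $g!$ permutations reports every correctly-sorted triple $(i,j,\pi)$ in per-group time $O(g!\,c_\epsilon^{g}\,n^{1+\epsilon}+n^2)$. Choosing $g=c\log n/\log\log n$ with $c$ a sufficiently small constant makes $g!\,c_\epsilon^{g}=O(n^{1/2})$, so each group costs $O(n^2)$ and the total sorting cost is $O(n^3/g)$. Each sorted permutation is encoded in $\log(g!)=O(\log n)$ bits, fitting into a single machine word, and a precomputed lookup table indexed by (permutation, rank) then lets each comparison of a binary search be resolved in $O(1)$ time; the $n^3(\log g)/g=O(n^3(\log\log n)^2/\log n)$ search cost dominates the bound.

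The main obstacle I anticipate is the algorithmic implementation of the sorted-representation step: we must show that the output of the dominance-reporting subroutine can be packed into a single machine word per box and combined with a sub-$n$-size global lookup table so that binary search truly runs in $O(\log g)$ time per query, rather than $O(\log g)$ comparisons but $\omega(\log g)$ actual time. A secondary concern, easily handled, is that enumerating all $g!$ permutations rather than only realizable ones is affordable precisely because $g!$ is subpolynomial for our choice of $g$, so we need not explicitly construct a hyperplane arrangement to filter the enumeration.
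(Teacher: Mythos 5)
Your proposal is correct and follows essentially the same route as the paper: partition the summation index into groups of size $g$, sort the difference set $D$ (or its per-group pieces $D_l$) using Fredman's lemma and the hyperplane-arrangement bound to deduce the order of each $g$-element box, then binary-search each box and take pointwise minima, balancing at $g=\sqrt{n\log n}$; for the algorithm, replace the implicit sort by red/blue dominance reporting over all $g!$ permutations with $g=\Theta(\log n/\log\log n)$. The ``obstacle'' you flag at the end is less of one than you fear: once the dominance subroutine has identified the correct sorting permutation $\pi$ of a box, a binary search for the predecessor of $T(i,j)$ in that box can be done in genuine $O(\log g)$ time directly, since each probe just evaluates $A_\ell(i,\pi(t))+B_\ell(\pi(t),j)$ and compares to $T(i,j)$ in $O(1)$ time; no auxiliary rank table is needed for this theorem (the more elaborate word-packed lookup tables in Section~\ref{sect:uniform} arise from the tripartition machinery specific to \ThreeSUM, not from target-min-plus).
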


\begin{proof}
We first show that the 
target-min-plus product $\targetplus(A,B,T)$ can be determined with $O((r+t)s^2 + rt\log s)$ comparisons,
where $A,B,$ and $T$ are $r\times s, s\times t,$ and $r\times t$ matrices, respectively.
Begin by sorting the set
\[
D = \{A(i,k) - A(i,k'),\: B(k',j) - B(k,j) \;|\; i\in[r], j\in[t], \mbox{ and } k,k'\in [s]\}.
\]
By Lemma~\ref{lem:sortX+Y} the number of comparisons required to sort $D$ is $O(|D| + (r+t)s\log(rst)) = O((r+t)s^2 + (r+t)s\log(rst))$.
We can now deduce the sorted order on 
\[
S(i,j) = \{A(i,k) + B(k,j) \;|\; k\in[s]\},
\]
for any pair $(i,j)\in [r]\times [t]$, and can therefore find $C(i,j) = \min(S(i,j) \cap [T(i,j),\infty))$
with a binary search over $S(i,j)$ using $\log s$ additional comparisons.
The total number of comparisons is $O((r+t)s^2 + rt\log s)$.
Note that this provides no improvement when $A$ and $B$ are square, that is, when $r=s=t=n$.
Following Fredman~\cite{F76} we partition $A$ and $B$ into 
rectangular matrices and compute their target-min-plus products separately.

Choose a parameter $g$ and 
partition $A$ into $A_{0},\ldots,A_{\ceil{n/g}-1}$ and $B$ into $B_0,\ldots,B_{\ceil{n/g}-1}$
where $A_\ell$ contains columns $\ell g,\ldots,(\ell+1)g-1$ of $A$ and $B_\ell$ contains the 
corresponding rows of $B$.
For each $\ell\in[n/g]$, compute the target-min-plus product $C_\ell = \targetplus(A_\ell,B_\ell,T)$
and set $C(i,j) = \min_{\ell\in[n/g]} (C_{\ell}(i,j))$.
This algorithm performs $O((n/g)\cdot (ng^2 + n^2\log n))$ comparisons to compute $\{C_{\ell}\}_{\ell\in[n/g]}$
and $n^2(n/g)$ comparisons to compute $C$.  When $g=\sqrt{n\log n}$ the number of comparisons is $O(n^{5/2}\sqrt{\log n})$.

To compute the product efficiently we use the geometric dominance approach of 
Chan~\cite{Chan08} and Bremner et al.~\cite{BremnerCDEHILPT14}.
Choose a parameter $g=\Theta(\log n/\log\log n)$ and partition $A$ into $n\times g$ matrices $\{A_\ell\}$ and $B$ into $g\times n$ matrices $\{B_\ell\}$.
For each  $\ell \in [n/g]$ and permutation $\pi : [g] \rightarrow [g]$ we will find those pairs $(i,j) \in [n]^2$ 
for which $\pi$ is the sorted order on $\{A_{\ell}(i,k) + B_{\ell}(k,j) \;|\; k\in [g]\}$.\footnote{Break ties in any consistent fashion so that the sorted order is unique.}
Such a triple satisfies the inequality $A_{\ell}(i,\pi(k)) + B_{\ell}(\pi(k),j) < A_{\ell}(i,\pi(k+1)) + B_{\ell}(\pi(k+1),j)$, for all $k\in [g-1]$.
By Fredman's trick this is equivalent to saying that the (red) point
\begin{align*}
&\left(\ldots, A_{\ell}(i,\pi(k+1)) - A_{\ell}(i,\pi(k)),\ldots\right)
\intertext{dominates the (blue) point}
&\left(\ldots, B_{\ell}(\pi(k),j) - B_{\ell}(\pi(k+1),j),\ldots \right)
\end{align*}
in each of the $g-1$ coordinates.  By Lemma~\ref{lem:redblue} the total time for all $\ceil{n/g}\cdot g!$ invocations of the dominance
algorithm is $O((n/g)\cdot g! \cdot c_\epsilon^{g-1} (2n/g)^{1+\epsilon})$ plus the output size, which is precisely $n^2\ceil{n/g}$.  For $\epsilon = 1/2$
and $g=\Theta(\log n/\log\log n)$ the running time is $O(n^3/g)$.
We can now compute the target-min-plus product $C_\ell = \targetplus(A_\ell,B_\ell,T)$ in $O(n^2\log g)$ time by iterating over
all $(i,j)\in[n]^2$ and performing a binary search to find the minimum element in $\{A_{\ell}(i,k) + B_{\ell}(k,j) \;|\; k\in [g]\} \cap [T(i,j),\infty)$.
Since $C = \targetplus(A,B,T)$ contains the pointwise minima of $\{C_\ell\}$, the total time to compute the target-min-plus product 
is $O(n^3(\log g) / g) = O(n^3(\log\log n)^2/\log n)$.
\end{proof}

The $\sqrt{\log n}$ factor in the decision tree complexity of target-min-plus arises comes from the 
binary searches, $n/g$ searches per pair $(i,j)\in[n]^2$.  If the searches were sufficiently correlated (either 
for fixed $(i,j)$ or fixed $\ell$) then there would be some hope that we could evade the information theoretic lower bound of
$\Omega(\log g)$ per search.
Using random sampling we form a hierarchy of rectangular target-min-plus products such that the solutions at one level gives a hint
for the solutions at the next lower level.  The cost of finding the solution, given the hint from the previous level, is $O(1)$ in expectation.
The same approach lets us shave off another $\log\log n$ factor off the algorithmic complexity of target-min-plus.

\begin{theorem}\label{thm:target-min-plus-rand}
The randomized decision tree complexity of the target-min-plus product of three $n\times n$ matrices
is $O(n^{5/2})$.  It can be computed in $O(n^3 \log\log n/\log n)$ time with high probability.
\end{theorem}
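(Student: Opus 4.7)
The plan is to reuse the block structure from Theorem~\ref{thm:target-min-plus} verbatim but replace each $\log g$-cost binary search with a successor query that consumes $O(1)$ comparisons in expectation, via a randomly sampled hierarchy layered on top of each sorted block. Concretely, I would still partition the middle dimension into $n/g$ pairs $(A_\ell,B_\ell)$ and, using Lemma~\ref{lem:sortX+Y} together with the Fredman-trick / bichromatic-dominance construction of Theorem~\ref{thm:target-min-plus}, obtain the sorted order of $\{A_\ell(i,k)+B_\ell(k,j) : k\in[g]\}$ for every pair $(i,j)$. This preprocessing is already within budget: with $g=\sqrt{n}$ the setup costs $O(n^{5/2})$ comparisons, and with $g=\Theta(\log n/\log\log n)$ the dominance reporting of Lemma~\ref{lem:redblue} runs in $O(n^3/g)=O(n^3\log\log n/\log n)$ time.

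Next I would draw, once and for all, a nested chain $S_{\ell,0}\supset S_{\ell,1}\supset\cdots\supset S_{\ell,r}$ of random subsets of $[g]$ with $|S_{\ell,t}|=\lceil g/2^{t}\rceil$ and $r=\lceil\log g\rceil$; the induced sorted orders on the restricted Cartesian sums are obtained as free byproducts of the sort already computed on the full block. To evaluate $C_\ell(i,j)=\min\{A_\ell(i,k)+B_\ell(k,j)\ge T(i,j)\}$ I locate the $T(i,j)$-successor in the coarsest sample $S_{\ell,r}$ in $O(1)$ comparisons, then walk down the hierarchy, at each level scanning only the $S_{\ell,t-1}$-elements sandwiched between $T(i,j)$ and the level-$t$ successor. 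The standard fractional-cascading / randomized-search-tree argument shows that because $S_{\ell,t-1}$ is obtained from $S_{\ell,t}$ by adjoining an independent random sample at rate $2$, the expected number of scanned elements at level $t-1$ given the level-$t$ successor is $O(1)$, and the sum over levels is a convergent geometric series. Hence each successor query uses $O(1)$ expected comparisons, and the total comparison cost over the $n^3/g$ queries is $O(n^3/g)$ in expectation: $O(n^{5/2})$ with $g=\sqrt{n}$ for the decision-tree bound, and $O(n^3\log\log n/\log n)$ with $g=\Theta(\log n/\log\log n)$ for the algorithmic bound.

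To upgrade expectation to \emph{high probability} I would follow the template from Section~\ref{sect:rand-param}: independently draw $\Theta(\log n)$ hierarchies $\{S_\ell^{(h)}\}_h$, first run a truncated pass to collect the list $\mathcal{Q}$ of successor queries that will actually be issued, then for each hierarchy estimate the total refinement cost on a Chernoff-sized random subsample of $\mathcal{Q}$, and finally run the full algorithm under the cheapest hierarchy; Markov on the expectation bound plus a Chernoff/union bound on the estimates gives deviation $1+o(1)$ from the mean with probability $1-n^{-c}$. The main obstacle will be the per-level $O(1)$-expected-scan claim, since the adversary fixes $A$, $B$, $T$ and hence the sorted order, and one must verify that the random subsampling is independent of the induced ranks so that the usual random-sampling successor analysis applies uniformly across all $(\ell,i,j)$; once formulated as a statement purely about ranks in a fixed permutation, however, the bound is the familiar one for randomized skip-list-style searches.
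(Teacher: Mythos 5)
There is a genuine gap, and it is exactly where you flag the ``main obstacle'': the claim that the per-level $O(1)$ expected scans sum to a \emph{convergent geometric series} and hence $O(1)$ per successor query is false. Your hierarchy $S_{\ell,0}\supset\cdots\supset S_{\ell,r}$ is a skip list over the $g$ indices of a single block: at each of the $r=\Theta(\log g)$ levels the expected number of scanned elements is $\Theta(1)$ (roughly 2 for rate-1/2 sampling), and these per-level costs do \emph{not} decrease as $t$ decreases. The total over levels is therefore $\Theta(\log g)$ expected comparisons per successor query, not $O(1)$ --- this is the standard skip-list height bound, and it is also information-theoretically forced, since locating the successor among $g$ sorted block elements requires $\Omega(\log g)$ bits. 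Plugging this back in gives $O(n^3\log g/g)$ expected comparisons, which with $g=\Theta(\sqrt{n})$ is $O(n^{5/2}\log n)$; balancing against the $O(n^2 g)$ sorting cost recovers only the deterministic $O(n^{5/2}\sqrt{\log n})$ bound of Theorem~\ref{thm:target-min-plus}, so the randomization has bought nothing.

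The missing idea is that the hierarchy must be \emph{global over the middle dimension $[n]$}, not per block. The paper fixes geometric intervals $I_{l,p}=[p g 2^l,(p+1)g2^l)$ and nested random samples $J_l$ with $|J_l\cap I_{l,p}|=g$ at every level, so that the number of intervals decreases geometrically as $l$ grows: there are only $\Theta(n/(g\log n))$ intervals at the coarsest level $l=\log\log n -1$. This is what lets the algorithm afford a full $\log g$-cost binary search at the top without it dominating --- the total top-level cost per $(i,j)$ is $O\bigl(\tfrac{n}{g\log n}\cdot \log g\bigr)=O(n/g)$ --- while descending from level $l{+}1$ to $l$ costs only $O(1)$ expected per interval, and $\sum_l n/(g2^l)=O(n/g)$. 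In your per-block scheme the $n/g$ coarsest searches all sit at the same granularity and each still needs $\Omega(\log g)$ comparisons after refinement, so the $\log g$ factor cannot be amortized away. To repair the proposal you would have to replace the per-block skip lists with a hierarchy whose coarse levels span $2^l$ blocks (equivalently, sample at geometrically decreasing rates over $[n]$ itself), and modify the set $D$ fed to Fredman's trick so that it contains the differences within each $J_l\cap I_{l,p}$ rather than only within fixed size-$g$ blocks --- which is precisely the construction in the paper.
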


\begin{proof}
As usual let $A,B,$ and $T$ be $n\times n$ matrices and $g$ be a parameter.  We will eventually set $g=\ceil{\sqrt{n}}$.
We partition the indices $[n]$ at $\log\log n$ levels.  Define 
$I_{l,p} = [p g 2^l, (p+1) g2^l)$ to be the $p$th interval at level $l$.
In other words, level-$l$ intervals have width $g2^l$ and a level-$(l+1)$ interval is the union of two level-$l$ intervals.
Form a series of nested index sets $[n] = J_0 \supset J_1 \supset \cdots \supset J_{\log\log n-1}$, 
such that $J_{l} \cap I_{l,p}$ is a uniformly random subset of $I_{l,p}$ of size $g$.
In other words, each element of $J_{l-1}$ is promoted to $J_{l}$ with probability 1/2, but in such a way
that $|J_{l} \cap I_{l,p}|$ is precisely its expectation $g$.

After generating the sets $\{J_l\}$ the algorithm sorts $D$ with
$O(n^2\log n + |D|) = O(n^{5/2})$ comparisons (see Lemma~\ref{lem:sortX+Y}), where
\[
D = \left\{
A(i,k) - A(i,k'), \;\; B(k',i) - B(k,i) \;\;\left|\;\; 
\begin{array}{l}
\mbox{$i\in[n]$ and $k,k' \in J_l \cap I_{l,p}$,}\\
\mbox{for some level $l$ and index $p$}
\end{array}
\right.
\right\}
\]
Fix $i,j\in[n]$.  We proceed to compute $C(i,j)$ with $O(n/g)$ comparisons with high probability.
If $K\subset [n]$ is a set of indices, define $\kappa(K)$ to be the witness of the target-min-plus
product restricted to $K$, that is,
\[
\kappa(K) = \argmin_{\substack{k\in K\: \text{such that\istrut[1]{2.5}}\\ A(i,k) + B(k,j) \ge T(i,j)}} (A(i,k) + B(k,j)).
\]
There may, in fact, be no such witness, in which case $\kappa(K) =\, \bottom$.
Let $\kappa_{l,p}$ be short for $\kappa(J_l \cap I_{l,p})$.
Notice that by Fredman's trick we can deduce the sorted order on
\[
S_{l,p} = \{A(i,k) + B(k,j) \;|\; k \in J_{l} \cap I_{l,p}\}
\]
without additional comparisons, for any $l$ and $p$.  We can therefore compute
the top-level witnesses
$\{\kappa_{\log\log n-1,p}\}_{p\in [n/(g2^{\log\log n -1})]}$ with 
$O(\frac{n}{g2^{\log\log n}}\cdot \log n) = O(\sqrt{n})$ comparisons via binary search.  Our goal now is
to compute the witnesses at all lower level intervals with $O(\sqrt{n})$ comparisons.
Suppose we have computed the level-$(l+1)$ witness $\kappa_{l+1,p}$
and wish to compute the level-$l$ witnesses of the constituent sequences, namely 
$\kappa_{l,2p}$ and $\kappa_{l,2p+1}$.
Define $\kappa_{l,2p}' = \kappa(J_{l+1} \cap I_{l,2p})$.  
Note that $\kappa_{l,2p}'$ is determined by $\kappa_{l+1,p}$ and the sorted order on $S_{l+1,p}$.
The distance between $\kappa_{l,2p}$ and $\kappa_{l,2p}'$ (according to the sorted order on $S_{l,2p}$) 
is stochastically dominated by a geometric random variable with mean 1.\footnote{With probability $1/2$ 
$\kappa_{l,2p} = \kappa_{l,2p}'$; with probability less than 1/4 $\kappa_{l,2p}$ is one less than $\kappa_{l,2p}'$
according to the sorted order on $S_{l,2p}$, and so on.}
The expected number of comparisons needed to determine $\kappa_{l,p}$ using linear search is therefore $O(1)$.
These geometric random variables are independent due to the independence of the samples, so we can apply standard
Chernoff-type concentration bounds~\cite{DubhashiPanconesi09}. The probability that the sum of these 
independent geometric random variables exceeds twice its expectation $\mu$
is $\exp(-\Omega(\mu)) = \exp(-\Omega(\sqrt{n}))$.

Once we have computed all the witnesses for level-0, $\{\kappa_{0,p}\}_{p\in [n/g]}$, we simply have to choose the best among them,
so $C(i,j) = \min \{A(i,\kappa_{0,p}) + B(\kappa_{0,p},j)  \;|\; p\in [n/g]  \mbox{ and }  \kappa_{0,p} \neq \bottom\}$.  
The total number of witnesses computed for fixed $i,j$ is $\sum_{l\ge 0} n/(g2^l) < 2n/g$.
The total number of comparisons is therefore $O(n^2g)$ to sort $D$ and $O(n^3/g)$ 
to compute all the witnesses and $C$, which is $O(n^{5/2})$ when $g=\sqrt{n}$.

To improve the $O(n^3 (\log\log n)^2/\log n)$ algorithm we apply the ideas above with different parameters.
Let $g = \Theta(\log n/\log\log n)$.  We consider the same partitions $\{I_{l,p}\}_{l,p}$ and nested index sets $\{J_l\}_{l}$,
but only use the first $\log\log\log n$ levels, not $\log\log n$ as before.
For each level $l\in [\log\log\log n]$, index $p\in [n/(g2^l)]$, and permutation $\pi \,:\, [g]\rightarrow [g]$, 
we compute those pairs $(i,j)$ for which $\pi$ is the sorting permutation on the elements of 
$J_l \cap I_{l,p}$.  This can be done in time linear in the output size, at most $2n^3/g$, and 
$\sum_{l\in [\log\log\log n]} \frac{n}{g2^l} g!c_\epsilon n^{1+\epsilon}$.
When $\epsilon=1/2$ and $g$ is sufficiently small the time spent computing dominating pairs is $O(n^3/g)$.
Since $g=\Theta(\log n/\log\log n)$ we can encode the sorting permutation of each $J_l\cap I_{l,p}$ in one word
and can answer a variety of queries about these permutations in
$O(1)$ time using $O(n)$-size precomputed tables.

Fix a pair $(i,j)\in [n]^2$.
When finding the top-level witnesses $\{\kappa_{\log\log\log n -1, p}\}_{p \in [2n/(g\log\log n)]}$
we can implement each step of the binary searches in $O(1)$ time using table lookups, for a total of $O(n/g)$ time.
We can also implement each step of the linear searches 
for witnesses $\kappa_{l,2p}$ and $\kappa_{l,2p+1}$ in $O(1)$ time using table lookups.
(In addition to encoding the sorting permutations on $J_{l+1} \cap I_{l+1,p}$, $J_l\cap I_{l,2p}$, and $J_l\cap I_{l,2p+1}$, 
we also need to encode the positions of $J_{l+1} \cap I_{l+1,p}$ within $J_{l} \cap I_{l+1,p}$ as a length-$2g$ bit vector.
This is needed in order to find $\kappa_{l,2p}'$ and $\kappa_{l,2p+1}'$ in $O(1)$ time, given $\kappa_{l+1,p}$ and the sorted order on $S_{l+1,p}$.)
Over all $(i,j) \in [n]^2$ the total number of comparisons is $O(n^3/g) = O(n^3 \log\log n/\log n)$ with high probability.
\end{proof}

The trivial time to solve \ZeroTriangle{} on sparse $m$-edge graphs is $O(m^{3/2})$.
Such graphs contain at most $O(m^{3/2})$ triangles, which can be enumerated in $O(m^{3/2})$ time.
We now restate and prove~\ref{thm:sparse-ZeroTriangle}.\\

\noindent{\bf Theorem~\ref{thm:sparse-ZeroTriangle}.}
{\em \StatementthmsparseZeroTriangle}
\ \\
\begin{proof}
We begin by greedily finding an acyclic orientation of the graph $G=(V,E,w)$.  Iteratively choose the vertex $v$ with the fewest number
of still unoriented edges and direct them all away from $v$.
Since every $m$-edge graph contains a vertex of degree less than $\Delta = \sqrt{2m}$, the maximum outdegree
in this orientation is less than $\Delta$.  We now use $\vec{E}$ instead of $E$ to emphasize that the set is oriented.

Select a random mapping $\colorfunc : V\rightarrow [K]$, where $K$ will be fixed soon.
The expected number of pairs of oriented edges $\{(u,x),(u,x')\}\subset \vec{E}$ having $\colorfunc(x)=\colorfunc(x')$ is less than $m\Delta/K$.
Any coloring that does not exceed this expected value suffices; we do not need to choose $\colorfunc$ at random.
We now sort the set $D$ with $O(m\log m + |D|) = O(m\log m+m\Delta/K)$ comparisons~\cite{Fredman76}, where
\[
D = \{w(u,x) - w(u,x') \;\,|\,\; u\in V \,\mbox{ and }\, (u,x),(u,x')\in\vec{E} \,\mbox{ and }\, \colorfunc(x) = \colorfunc(x')\}.
\]
Call a triangle on $\{u,v,x\}$ {\em type-$((u,v),\kappa)$} if the orientation of the edges is $(u,v),(u,x),(v,x)$ and $\colorfunc(x)=\kappa$.  
Clearly every triangle is of one type and there are $mK$ types.  A type-$((u,v),\kappa)$ zero-weight triangle exists iff $-w(u,v)$ appears in 
the set $\{w(u,x) + w(v,x) \;|\; \mbox{$x\in V$ and $\colorfunc(x) = \kappa$}\}$.
By Fredman's trick the sorted order of this set is determined by the sorted order of $D$, 
since $w(u,x) + w(v,x) < w(u,x') + w(v,x')$ iff $w(u,x) - w(u,x') < w(v,x') - w(v,x)$.  See Figure~\ref{fig:triangle}.
We can therefore determine if there exists a zero-weight triangle of a particular type with $O(\log\Delta)$ comparisons via binary search.
The total number of comparisons is $O(m\log m + m\Delta/K + mK\log\Delta)$, which is $O(m^{5/4}\sqrt{\log m})$ when 
$K=\sqrt{\Delta/\log\Delta}=O(m^{1/4}/\sqrt{\log m})$.  The $\sqrt{\log m}$ factor can be shaved off using randomization, exactly as
in Theorem~\ref{thm:target-min-plus-rand}.  We form $\log\log n$ levels of colorings, where color class $p$ at the $(l+1)$th level is the
union of classes $2p$ and $2p+1$ at the $l$th level.
After the searches are conducted at level $l+1$, the expected cost per search at level $l$ is $O(1)$. 

\begin{figure}
\centering
\scalebox{.25}{\includegraphics{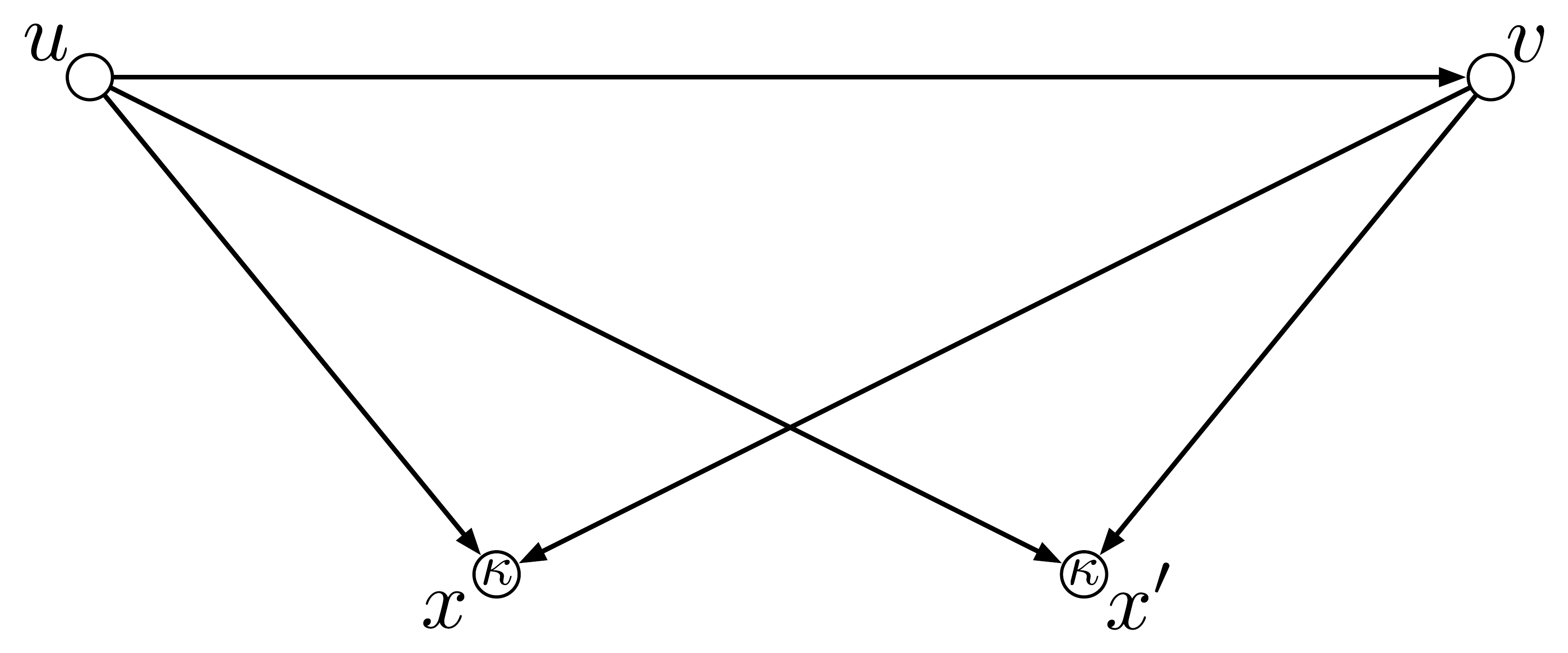}}
\caption{\label{fig:triangle}In this example $\colorfunc(x)=\colorfunc(x')=\kappa$ and both triangles on $\{u,v,x\}$ and $\{u,v,x'\}$
are of the same type.}
\end{figure}

To solve \ZeroTriangle{} efficiently we greedily orient the graph as before, stopping when all remaining vertices have degree at least $\Delta$,
where $\Delta$ is a parameter to be fixed shortly.  (The unoriented subgraph remaining is called the {\em $\Delta$-core}.)
For each vertex $u$ and each pair of outgoing edges $(u,v),(u,x)\in\vec{E}$, we check whether $(u,v,x)$ is a triangle and, if so, whether it has zero weight.  (Note that the edge $(v,x)$, if it exists, may be in the $\Delta$-core and therefore not have an orientation.)  This takes $O(m\Delta)$ time.
It remains to check triangles contained entirely in the $\Delta$-core.  Since the $\Delta$-core has 
at most $2m/\Delta$ vertices we can solve \ZeroTriangle{} on it in $O((m/\Delta)^3 (\log\log m)^2/\log m)$ time
or $O((m/\Delta)^3 \log\log m/\log m)$ time with high probability.
The total cost is balanced when $\Delta = \sqrt{m} \left((\log\log m)^2 / \log m)\right)^{1/4}$
or $\Delta =  \sqrt{m} \left(\log\log m / \log m\right)^{1/4}$ depending on whether uses the randomized 
or deterministic \ZeroTriangle{} algorithm.
\end{proof}

The \ConvolutionThreeSUM{} problem is easily reducible to \ThreeSUM, so our $O(n^{3/2}\sqrt{\log n})$
and $O(n^2 / \polylog(n))$ bounds for \ThreeSUM{}  extend directly to \ConvolutionThreeSUM.
However, \ConvolutionThreeSUM{} has additional structure, which makes it amenable to the same
random sampling techniques used in Theorem~\ref{thm:target-min-plus-rand}.
We give only a sketch of the proof of Theorem~\ref{thm:ConvolutionThreeSUM}
as the analysis is essentially the same as that found in Theorem~\ref{thm:target-min-plus-rand}.\\

\noindent{\bf Theorem~\ref{thm:ConvolutionThreeSUM}.}
{\em \StatementthmConvolutionThreeSUM}
\ \\

\begin{proof} (sketch)
In the \ConvolutionThreeSUM{} problem we must determine if there is a $k\in[n]$ such that $A(k)$ occurs
on the $k$th antidiagonal of the matrix $A+A$.  
In contrast to \ThreeSUM, the rows and columns of $A+A$ are not sorted.  On the other hand, we do not
need to look for $A(k)$ in the whole matrix, just those locations along an antidiagonal.

The $O(n^{3/2}\sqrt{\log n})$ decision tree bound is proved as in Section~\ref{sect:nonuniform}, by partitioning
the matrix into $g\times g$ blocks and for each $k$, conducting binary searches for $A(k)$ in the appropriate
antidiagonals of at most $2n/g$ boxes.  In order to shave off the $\sqrt{\log n}$ factor we use the same random sampling
approach of Theorem~\ref{thm:target-min-plus-rand}.  We partition $A+A$ at $\log\log n$ levels, where level-$l$
boxes have size $g2^l \times g2^{l}$ and are the union of four level-$(l-1)$ boxes.  The rows and columns are sampled
at $\log\log n$ levels, where a row or column at level $l-1$ is promoted to level $l$ with probability 1/2.  Note that an element of $A+A$
appears at level-$l$ if and only if both its row and column are in the level-$l$ sample.  Since elements along any antidiagonal
share no rows or columns, the events that they appear at level-$l$ are entirely independent.
This independence property allows us to search for $A(k)$ in level-$l$ sampled boxes in $O(1)$ expected time,
given the predecessors of $A(k)$ in the level-$(l+1)$ sampled boxes.  Algorithms running in $O(n^2 (\log\log n)^2/\log n)$ (deterministically)
or $O(n^2 \log\log n/\log n)$ (with high probability) are obtained using the methods applied in Theorems~\ref{thm:target-min-plus} and \ref{thm:target-min-plus-rand}.
Alternatively, we could apply the Williams-Williams reduction~\cite{WilliamsW13} from \ConvolutionThreeSUM{} to \ZeroTriangle{}
and then invoke the algorithms of Theorems~\ref{thm:target-min-plus} and \ref{thm:target-min-plus-rand} as black boxes.
\end{proof}

\section{Conclusion}\label{sect:conclusion}

Since the introduction of Fredman's~\cite{Fredman76} $(\min,+)$-product algorithm in 1976,
many have become comfortable with the idea that some numerical problems {\em naturally} have a large gap ($\tilde{\Omega}(\sqrt{n})$)
between their (nonuniform) decision-tree complexity and (uniform) algorithmic complexity.\footnote{Other examples include $(\min,+)$-convolution, $(\operatorname{median},+)$-convolution, polyhedral 3SUM (see Bremner et al.~\cite{BremnerCDEHILPT14}), and 
\Erdos-Szekeres partitioning, that is, decomposing a sequence into $O(\sqrt{n})$ monotonic subsequences.  See Bar-Yehuda and Fogel~\cite{BF98},
Dijkstra~\cite{Dij80}, and Fredman~\cite{F76}.}
From this perspective, our decision trees for \ThreeSUM{} and \ZeroTriangle{} (with depth $\tilde{O}(n^{3/2}$  and $\tilde{O}(n^{5/2})$)
do not constitute convincing evidence that \ThreeSUM{} and \ZeroTriangle{} have truly subquadratic and subcubic algorithms.
However, Williams's~\cite{Williams14} recent breakthrough on the algorithmic complexity of $(\min,+)$-product should shake one's confidence
that these $\sqrt{n}$ gaps are natural.  To close them one may simply need to develop more sophisticated algorithmic machinery.

The exponent $3/2$ has a special significance in \Patrascu's program~\cite{Patrascu10} of 
conditional lower bounds based on hardness of \ThreeSUM.
His superlinear lower bounds on triangle enumeration and polynomial lower bounds on dynamic data structures
depend on the complexity of \ThreeSUM{} being $\Omega(n^{3/2 + \epsilon})$, for some $\epsilon>0$.
In most other \ThreeSUM-hardness proofs there is nothing sacred about the 3/2 threshold (or any other exponent).
For example, if \ThreeSUM{} requires $\Omega(n^{1.05})$ time
then finding three collinear points in a set $P\subset \mathbb{R}^2$ also requires $\Omega(|P|^{1.05})$ time~\cite{GajentaanO95}.

\ignore{
\bibliographystyle{plain}
\bibliography{../../references}
}

\appendix

\section{Bichromatic Dominating Pairs}\label{sect:redblue}

For the sake of completeness we shall review a standard divide and conquer dominating pairs
algorithm of Preparata and Shamos~\cite[p.~366]{PreparataShamos85} and
give a short proof of Lemma~\ref{lem:redblue} due to Chan~\cite{Chan08}.

\subsection{The Divide and Conquer Algorithm}

We are given $n$ red and blue points in $P\subset \mathbb{R}^d$, at least one of each color, 
and wish to report all pairs $(p,q)$ where 
$p=(p_i)_{i\in [d]}$ is red, $q=(q_i)_{i\in [d]}$ is blue and 
$p_i \ge q_i$ for each $i\in [d]$.
When $d=0$ the algorithm simply reports every pair of points, so assume $d\ge 1$.
Find the median $h$ on the last coordinate in $O(n)$ time~\cite{BFPRT73} 
and partition $P$ into disjoint sets $P_L$, $P_R$ 
of size at most $\ceil{n/2}$, where
\begin{align*}
P_L &\subset \{p\in P \:|\: p_{d-1} \le h\}\\
P_R &\subset \{p\in P \:|\: p_{d-1} \ge h\}.
\end{align*}
Furthermore, there cannot be a red $p\in P_L $  and blue $q\in P_R$ such that $p_{d-1} = q_{d-1} = h$.\footnote{In other words, 
among points with the same last coordinate, blue points precede red points.
If the domination criterion were {\em strict}, that is, if $(p,q)$ were a dominating pair only if $p_i > q_i$ for all $i\in [d]$,
then we would break ties the other way, letting red points precede blue points.}
At this point all dominating pairs are in $P_L,$ or $P_R$, or have one point in each, in which case the
blue point is necessarily in $P_L$ and the red in $P_R$.  We make three recursive calls to find dominating
pairs of each variety.  The first two calls are on $\ceil{n/2}$ points in $\mathbb{R}^d$.  The third recursive call
is on all blue points in $P_L$ and all red points in $P_R$; after stripping their last coordinate they lie in $\mathbb{R}^{d-1}$.

Excluding the cost of reporting the output, the running time of this algorithm is bounded by $T_d(n)$, defined inductively as
\begin{align*}
T_0(n) &= T_d(1) \,=\, 0\\
T_d(n) &= 2T_d(n/2) + T_{d-1}(n) + n
\end{align*}

We prove by induction that $T_d(n) \le c_\epsilon n^{1+\epsilon} - n$, a bound which holds in all base cases.  
Assuming the claim holds for all smaller values of $d$ and $n$,
\begin{align*}
T_d(n) &\le 2\left(c_\epsilon^d (n/2)^{1+\epsilon} - n/2\right) + \left(c_\epsilon^{d-1}n^{1+\epsilon} - n\right) + n\\
	&= \left(c_\epsilon^d/2^{\epsilon} + c_{\epsilon}^{d-1}\right) n^{1+\epsilon} - n\\
	&= \left(1/2^{\epsilon} + 1/c_{\epsilon}\right) \cdot c_\epsilon^d n^{1+\epsilon} - n\\
	&= c_\epsilon^d n^{1+\epsilon} - n	& \mbox{By defn. of $c_\epsilon = 2^{\epsilon}/(2^{\epsilon}-1)$.}
\end{align*}

\ignore{
\subsection{Conditional Lower Bounds}

Theorem~\ref{thm:dompair-lb} is essentially a restatement of Chan's algorithm~\cite{Chan08},
assuming the existence of a subquadratic dominating pairs algorithm.

\begin{theorem}\label{thm:dompair-lb}
Suppose that bichromatic dominating pairs on $n$ points in $[n]^d$ 
can be solved in time linear in the output size and $O(d^\delta n + n^{2-\epsilon})$, for some $\epsilon>0,\delta\ge 1$.
Then the $(\min,+)$-product of two $n\times n$ matrices can be computed in truly subcubic time, namely 
$O(n^{3-\min\{\epsilon, (1-\epsilon)/\delta\}})$ time.
\end{theorem}

\begin{proof}
First observe that restricting the space to be $[n]^d$ is without loss of any generality.  In $O(dn\log n)$ time
we can sort all coordinates and thereby reduce a point set in $\mathbb{R}^d$ to a set in rank-space $[n]^d$.

Suppose we want to compute the $(\min,+)$-product of $A,B \in \mathbb{R}^{n\times n}$.
Choose a parameter $g$ and partition $A$ into $n\times g$ submatrices $\{A_\ell\}_{0\le \ell < \ceil{n/g}}$ and $B$ into $g\times n$ submatrices $\{B_\ell\}_{0\le \ell < \ceil{n/g}}$.
For each $\ell \in [n/g]$ and each $k \in [g]$ we identify those pairs $(i,j) \in [n]^2$ for which 
\[
k = \argmin_{k' \in [g]} (A_\ell(i,k') + B_\ell(k',j))
\]
using a dominating pairs algorithm.  Note that $(i,j)$ is such a pair if, for all $k'\neq k$, $A_\ell(i,k)+B_\ell(k,j) < A_\ell(i,k') + B_\ell(k',j)$,
assuming ties are broken consistently.  Equivalently, $(i,j)$ is such a pair if the red point $p_j$ dominates the blue point $q_i$, where
\begin{align*}
p_j = \left(B_\ell(0,j) - B_\ell(k,j),\;   B_\ell(1,j) - B_\ell(k,j),\;   \ldots\right)\\
q_i = \left(A_\ell(i,k) - A_\ell(i,0),\;   A_\ell(i,k) - A_\ell(i,1),\;   \ldots\right).
\end{align*}
This is effectively a point set in $\mathbb{R}^{g-1}$ since the $k$th coordinate can be omitted.  By assumption all red-blue dominating
pairs can be reported in time linear in the output size and $d^\delta n + n^{2-\epsilon}$.  Over all $g\cdot \ceil{n/g}$ invocations of the dominating pairs algorithm, $(i,j)$
is reported exactly $\ceil{n/g}$ times, one per $\ell$, so the output size is $\ceil{n/g}n^2$.  The total cost of computing dominating pairs
is therefore $O(n^3/g + g^\delta n^2 + n^{3-\epsilon})$.  Depending on whether $\delta$ is less than or greater than $1/\epsilon-1$,
the optimum choice is to set $g=n^\epsilon$, thereby balancing
the first and third terms, or to set $g=n^{\frac{1}{\delta+1}}$, thereby balancing the first two terms.
The total time to compute the $(\min,+)$-product is therefore $O\mathopen{}\left(n^{3 - \min\left\{\epsilon, \frac{1}{\delta+1}\right\}}\right)\mathclose{}$.
\end{proof}

It is not strictly necessary that the dominating pairs algorithm run in time linear in the output size in order to obtain a truly subcubic bound for $(\min,+)$-product.
For example, a bound of $O(\mbox{(output size)}n^\gamma + d^\delta n + n^{2-\epsilon})$ would also suffice if $\gamma\ge 0$ is sufficiently small relative 
to $\delta$ and $\epsilon$.
}

\end{document}